\newtheorem{theorem}{Theorem}
\newtheorem{lemma}{Lemma}
\newtheorem{example}{Example}
\newtheorem{conjecture}{Conjecture}
\title{\LARGE \bf Phase Analysis of MIMO LTI Systems \thanks{This work was supported
    in parts by the Research Grants Council of Hong Kong Special Administrative Region, China, under the Theme-Based Research Scheme T23-701/14-N.}}
\author{Wei Chen, Dan Wang, Sei Zhen Khong, and Li Qiu
\thanks{W. Chen, D. Wang, and L. Qiu are with the Department of Electronic and Computer Engineering, The Hong Kong University of Science and Technology, Clear Water Bay, Kowloon, Hong Kong, China. Email:
      wchenust@gmail.com, dwangah@connect.ust.hk, eeqiu@ust.hk}
\thanks{S. Z. Khong is with the Department of Electrical and Electronic Engineering, The University of Hong Kong, Pokfulam, Hong Kong, China. Email: szkhong@hku.hk}}
\begin{document}

\maketitle
\thispagestyle{empty}
\pagestyle{empty}

\begin{abstract}
In this paper, we introduce a definition of phase response for a class of multi-input multi-output (MIMO) linear time-invariant (LTI) systems, the frequency responses of which are cramped at all frequencies. This phase concept generalizes the notions of positive realness and negative imaginariness. We also define the half-cramped systems and provide a time-domain interpretation. As a starting point in an endeavour to develop a comprehensive phase theory for MIMO systems, we establish a small phase theorem for feedback stability, which complements the well-known small gain theorem. In addition, we derive a sectored real lemma for phase-bounded systems as a natural counterpart of the bounded real lemma.
\end{abstract}

\begin{keywords}
MIMO systems, phase response, small phase theorem, sectored real lemma, half-cramped systems
\end{keywords}

\smallskip

\section{Introduction}

In the classical frequency domain analysis of single-input-single-output (SISO) systems, the magnitude (gain) response and phase response go hand in hand. In particular, the
Bode magnitude plot and phase plot are always drawn shoulder to shoulder. The combined Bode plot of a loop transfer function provides a significant amount of useful information about the closed-loop stability and performance. The gain and phase crossover frequencies of a loop transfer function give salient information on the
gain and phase margins of the feedback system. The famous Bode gain-phase integral relation binds the gain and phase together. In frequency domain controller synthesis, phase also plays an important role. Loop-shaping design techniques, such as lead and lag compensation, are rooted in the phase stabilization ideas.

The inception of MIMO systems theory sees extension and thriving of the magnitude concept, but not equal flourishing in the phase concept. While
the small gain theorem is widely known in the field of robust control, much less attention has been paid to the development of a small phase theorem. Moreover, the magnitude plot of a MIMO frequency response has been inbuilt to the computing environment MATLAB, a useful phase plot has not been available in practice. Several notable preliminary works on MIMO systems phases include \cite{Chen,Freudenberg,Anderson1988} and  \cite{Macfarlane1981}. The references \cite{Chen,Freudenberg,Anderson1988} extended the Bode gain-phase integral relation for SISO systems to MIMO systems. The reference \cite{Macfarlane1981} proposed a definition of phases for MIMO systems, based on which a small phase theorem was formulated. However, the condition therein depends on both phase and gain information, which somewhat deviates from the initial purpose of finding a phase counterpart to the small gain theorem.

An important line of research with a phasic point of view is on positive real (passive) and negative imaginary systems. Roughly speaking, one can think of positive real systems as those whose phases lie within $[-\frac{\pi}{2},\frac{\pi}{2}]$ and negative imaginary systems as those whose phases over positive frequencies lie within $[-\pi,0]$. Research on positive real systems can be traced back to more than half a century ago and
has led to a rich theory through efforts of generations of researchers. See books \cite{Anderson1973,BaoLee,Brogliato,DV1975} and the survey paper \cite{Kottenstette} for a review. Over the past two
decades, negative imaginary systems \cite{lanzon2008stability,petersen2010csm} and counter-clockwise dynamics \cite{Angeli2006} have attracted much attention. The abundant studies on these systems, concerning feedback stability, performance and beyond, provide valuable insights in developing a general phase theory for MIMO LTI systems.

One main reason accounting for the underdevelopment of MIMO phases is the following. While the gains of a complex
matrix are well described by its singular values, a universally accepted definition of
matrix phases has been lacking over a long period. Very recently, we initiated to adopt the canonical angles introduced in \cite{FurtadoJohnson2001} as the phases of a cramped complex matrix whose numerical range does not contain the origin \cite{WCKQ2018}. We studied various properties of matrix phases, some of which are briefly reviewed later. This paves the ground for conducting a systematic study of phase analysis and design for MIMO LTI systems.

In this paper, we first define the phase responses of MIMO LTI systems whose frequency responses are cramped at all frequencies. Such phase concept agrees with and generalizes the notions of positive realness and negative imaginariness. We then develop a small phase theorem for negative feedback interconnections of phase bounded systems, complementing the well known small gain theorem. We derive a sectored real lemma, which gives state space conditions for phase-bounded systems in terms of linear matrix inequalities (LMIs). This serves as a counterpart of bounded real lemma. In addition, we pay special attention to the class of half-cramped systems which exhibit a nice time-domain interpretation. We absorb much nutrition from the existing studies on positive real systems, negative imaginary systems, KYP lemma, generealized KYP lemma, integral quadratic constraints (IQCs), etc. along the way.

The rest of the paper is organized as follows. A review of matrix phases is presented in Section II. The phase responses of MIMO LTI systems are defined in Section III, followed by the discussions on half-cramped systems in Section IV. A small phase theorem is presented in Section V. State-space conditions are derived for phase bounded systems in Section VI. The paper is concluded in Section VII. The notation used in this paper is more or less standard and will be made clear as we proceed.

\section{Phases of a Complex Matrix}
A nonzero complex scalar $c$ can be represented in the polar form as $c=\sigma e^{i\phi}$ with $\sigma >0$ and $\phi$ taking values in a half open
$2\pi$-interval, typically $[0,2\pi)$ or $(-\pi,\pi]$.  Here $\sigma=|c|$ is called the modulus or the magnitude and $\phi=\angle c$ is called the
argument or the phase.  The polar form is particularly useful when multiplying two complex numbers. We simply have $|ab| =|a| |b|$ and
$\angle (ab) = \angle a + \angle b \mbox{ mod $2\pi$}$.

It is well understood that an $n \times n$ complex matrix $C$ has $n$ magnitudes, served by the $n$ singular values
\[
\sigma (C) = \begin{bmatrix} \sigma_1 (C) & \sigma_2 (C) & \cdots & \sigma_n (C) \end{bmatrix}
\]
with
$\overline{\sigma}(C)=\sigma_1 (C) \geq \sigma_2 (C) \geq \cdots \geq \sigma_n(C) = \underline{\sigma}(C)$ \cite{HornJohnson}. The magnitudes of a matrix possess plentiful nice properties, among which the following majorization inequality regarding the magnitudes of matrix products are of particular interest to the control community.

Given $x,y\in \mathbb{R}^n$, we denote by $x^\downarrow$ and $y^\downarrow$ the rearranged versions of $x$ and $y$ so that their elements
are sorted in a non-increasing order. Then, $x$ is said to be \emph{majorized} by $y$ \cite{Marshall}, denoted by $x\prec y$, if
\begin{align*}
  \sum_{i=1}^k x^\downarrow_i\leq\sum_{i=1}^k y^\downarrow_i,\ k=1,\ldots, n-1,\;\;\; \text{and} \;\;
  \sum_{i=1}^n x^\downarrow_i=\sum_{i=1}^n y^\downarrow_i.
\end{align*}
When $x$ and $y$ are nonnegative, $x$ is said to be \emph{log-majorized} by $y$, denoted by $x \prec_{\log} y$, if
\begin{align*}
  \prod_{i=1}^k x^\downarrow_i\leq\prod_{i=1}^k y^\downarrow_i,\ k=1,\ldots, n-1, \;\;\; \text{and} \;\;
  \prod_{i=1}^n x^\downarrow_i=\prod_{i=1}^n y^\downarrow_i.
\end{align*}
The magnitudes of matrix product satisfy \cite{Marshall}
\begin{align}
\sigma(AB) \prec_{\log} \sigma (A) \odot \sigma(B), \label{gainmajorization}
\end{align}
where $\odot$ denotes the Hadamard product, i.e., the elementwise product.

In contrast to the magnitudes of a complex matrix $C$, how to define the phases of $C$ appears to be an unsettled issue.  An early attempt \cite{Macfarlane1981} defined the phases of
$C$ as the phases of the eigenvalues of the unitary part of its polar decomposition. This definition was motivated by the seeming generalization of the polar form of a scalar to the polar decomposition of a
matrix. However, phases defined this way do not have certain desired properties.

Very recently, we discovered a more suitable definition of matrix phases based on numerical range \cite{WCKQ2018}. The numerical range, also called field of
values, of a matrix $C \in \mathbb{C}^{n\times n}$ is defined as $W(C) = \{ x^*Cx: x \in \mathbb{C}^n \mbox{ with } \|x\|=1\}$, which, as a subset of
$\mathbb{C}$, is compact and convex, and contains the spectrum of $C$ \cite{horntopics}.

If $0\notin W(C)$, then $W(C)$ is contained in an open half complex plane due to its convexity. In this case, $C$ is said to be a cramped matrix. It is known
that a cramped $C$ is congruent to a diagonal unitary matrix that is unique up to a permutation \cite{Horn,ZhangFuzhen2015}, i.e., there exists a nonsingular matrix
$T$ and a diagonal unitary matrix $D$ such that
$C=T^*DT$.
This factorization is called sectoral decomposition in \cite{ZhangFuzhen2015}.
Let $\delta(C)$ be the field angle of $C$, i.e., the angle subtended by the two supporting rays of $W(C)$ at the origin.
We define the phases of $C$, denoted by
$\phi_1(C),\phi_2(C),\dots,\phi_n(C)$, to be the phases of the eigenvalues of $D$, taking values in an interval $(\theta, \theta+\pi)$, where
$\theta\in[-\pi,\delta(C))$.  The phases defined in this fashion coincide with the canonical angles of $C$ introduced in \cite{FurtadoJohnson2001}.  Assume
without loss of generality that
\[
\overline{\phi}(C)=\phi_1(C)\geq \phi_2(C)\geq \dots\geq \phi_n(C)=\underline{\phi}(C).
\]
Moreover, define $\phi (C) = [ \phi_1 (C) \ \ \phi_2 (C) \ \ \cdots \ \ \phi_n(C) ]$.

The phases defined above admit the maximin and minimax expressions
\cite{Horn}:
\begin{equation*}
\begin{split}
\phi_i(C)&=\max_{\mathcal{M}: \mathrm{dim}\mathcal{M}=i}\min_{x\in \mathcal{M}, \|x\|=1} \angle x^*Cx\\&=\min_{\mathcal{N}: \mathrm{dim}\mathcal{N}=n-i+1}\max_{x\in \mathcal{N}, \|x\|=1}\angle x^*Cx.
\end{split}
\end{equation*}
In particular,
\begin{align*}
\overline{\phi}(C)&=\max_{x\in\mathbb{C}^n,\|x\|=1}\angle x^*Cx,\\
\underline{\phi}(C)&=\min_{x\in\mathbb{C}^n,\|x\|=1}\angle x^*Cx.
\end{align*}
A graphic interpretation of the phases is illustrated in \mbox{Fig. \ref{fig1}}. The two angles from the positive real axis to each of the two supporting rays of $W(C)$ are $\overline{\phi}(C)$ and $\underline{\phi}(C)$ respectively. The other phases of $C$ lie in between.
\begin{figure}[htb]
\centering
\includegraphics[scale=0.5]{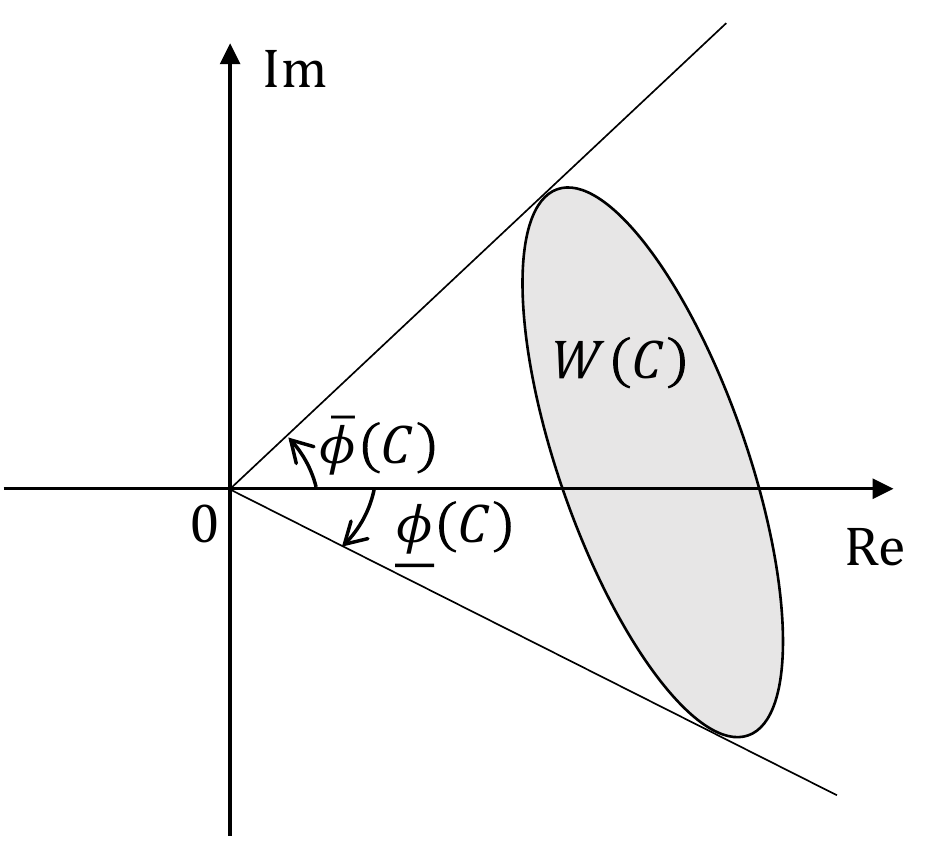}
\caption{Geometric interpretation of $\overline{\phi}(C)$ and $\underline{\phi}(C)$.}
\label{fig1}
\end{figure}

It is noteworthy that the notion of matrix phases subsumes the well-studied strictly accretive matrices \cite{Kato}, i.e., matrices with positive definite Hermitian part. In particular, the phases of $C$ lie in $(- \pi/2, \pi/2)$ if and only if $C$ is strictly accretive.

Given matrix $C$, we can check whether it is cramped or not by plotting its numerical range. From the plot of numerical range, we can also determine a $\pi$-interval $(\theta,\theta+\pi)$ in which the phases take values. How to efficiently compute $\phi(C)$ is an important issue. The following observation provides some insights along this direction. Suppose $C$ is cramped. Then it admits a sectoral decomposition $C=T^*DT$ and
thus
\begin{align*}
C^{-1}C^*=T^{-1}D^{-1}T^{-*}T^*D^*T=T^{-1}D^{-2}T,
\end{align*}
indicating that $C^{-1}C^*$ is similar to a diagonal unitary matrix. Hence, we can first compute $\angle\lambda(C^{-1}C^*)$, taking values in $(-2\theta\!-\!2\pi,-2\theta)$, and then let $\phi(C)=-\frac{1}{2}\angle\lambda(C^{-1}C^*)$. This gives one possible way to compute $\phi(C)$. We are currently exploring other methods, hopefully of lower complexity, for the computation of matrix phases.

The matrix phases defined above have plentiful properties, of which a comprehensive study has been conducted in \cite{WCKQ2018}.
First, note that the set of phase bounded matrices defined as
\begin{multline}
\mathcal{C}[\alpha, \beta]\\=\left\{C\!\in\! \mathbb{C}^{n\times n}: C \text{ is cramped and }\overline{\phi}(C)\!\leq\! \beta,\ \underline{\phi}(C)\!\geq\! \alpha\right\},\nonumber
\end{multline}
where $0\leq \beta-\alpha<2\pi$, is a cone. In addition, the following lemma can be shown by exploiting the maximin and minimax expressions of phases.

\begin{lemma}[\!\!\cite{WCKQ2018}]\label{convexcone}
If $\beta-\alpha<\pi$, then $\mathcal{C}[\alpha, \beta]$ is a convex cone.
\end{lemma}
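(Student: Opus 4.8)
The plan is to reduce convexity to closure under addition and then settle the latter using the variational characterization of phases together with one elementary planar fact. Since $\mathcal{C}[\alpha,\beta]$ is already known to be a cone, and a cone is convex exactly when it is closed under addition, it suffices to fix $C_1,C_2\in\mathcal{C}[\alpha,\beta]$ and show $C_1+C_2\in\mathcal{C}[\alpha,\beta]$. Because $\beta-\alpha<\pi$, the interval $[\alpha,\beta]$ fits inside a half-open $\pi$-interval, so the argument $\angle z$ of a nonzero complex number is unambiguously defined for every $z$ in the closed sector $S=\{re^{i\phi}:r>0,\ \alpha\leq\phi\leq\beta\}$, and I would fix this branch once and for all.

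First I would record that $W(C)\subseteq S$ for any $C\in\mathcal{C}[\alpha,\beta]$. Indeed, by the maximin and minimax expressions every unit vector $x$ satisfies $\alpha\leq\underline{\phi}(C)\leq\angle x^*Cx\leq\overline{\phi}(C)\leq\beta$, so $x^*Cx\in S$. The geometric heart of the argument is that $S$ is a pointed convex cone in $\mathbb{C}\cong\mathbb{R}^2$ precisely because its opening angle $\beta-\alpha$ is strictly less than $\pi$, and a convex cone is closed under addition. Applying this with $z_1=x^*C_1x\in S$ and $z_2=x^*C_2x\in S$, I obtain for every unit vector $x$ that
\[
x^*(C_1+C_2)x=x^*C_1x+x^*C_2x=z_1+z_2\in S,
\]
so in particular $x^*(C_1+C_2)x\neq0$ with $\alpha\leq\angle x^*(C_1+C_2)x\leq\beta$.

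Since this holds for all unit vectors $x$, it follows that $0\notin W(C_1+C_2)$, hence $C_1+C_2$ is cramped and its phases are defined. Invoking the maximin and minimax characterizations once more then yields $\overline{\phi}(C_1+C_2)=\max_{\|x\|=1}\angle x^*(C_1+C_2)x\leq\beta$ and $\underline{\phi}(C_1+C_2)=\min_{\|x\|=1}\angle x^*(C_1+C_2)x\geq\alpha$, so $C_1+C_2\in\mathcal{C}[\alpha,\beta]$ and convexity is established.

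The main difficulty is conceptual rather than computational: one must notice that the hypothesis $\beta-\alpha<\pi$ enters in two linked ways, namely to fix a single-valued branch of $\angle$ on $S$ and, more essentially, to guarantee that $S$ is a convex cone so that the pointwise sums $z_1+z_2$ neither vanish nor wrap past the supporting rays of the sector. If instead $\beta-\alpha\geq\pi$, then $S$ is a half-plane (or larger) and fails to be convex, whereupon two elements of $S$ can sum to a point with argument outside $[\alpha,\beta]$ or equal to the origin, simultaneously breaking the cramped property and the phase bounds; so the strict inequality is exactly what the proof needs and cannot be relaxed.
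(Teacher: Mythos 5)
Your proof is correct, and it follows exactly the route the paper indicates: the paper states (citing \cite{WCKQ2018}) that the lemma ``can be shown by exploiting the maximin and minimax expressions of phases,'' which is precisely your argument --- use those expressions to place $W(C_1)$ and $W(C_2)$ in the pointed sector $S$, exploit that $S$ is a convex cone when $\beta-\alpha<\pi$ to conclude $W(C_1+C_2)\subseteq S$, and read the phase bounds back off the maximin/minimax characterization. Your reduction of convexity to additivity of a cone and your closing remark on why $\beta-\alpha<\pi$ is indispensable are both sound.
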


Another important property pertinent to later developments in this paper is concerned with product of cramped matrices. In view of the magnitude counterpart in \eqref{gainmajorization}, one may expect
$\phi (AB) \prec \phi(A)+\phi(B)$ to hold for cramped matrices $A$ and $B$. This, unfortunately, fails even for positive definite $A$ and
$B$. Notwithstanding, if we consider instead $\lambda (AB)=\begin{bmatrix}\lambda_1(AB)&\dots&\lambda_n(AB)\end{bmatrix}$, i.e., the vector of eigenvalues of $AB$, the following weaker but useful result has been derived.

\begin{lemma}[\!\!\cite{WCKQ2018}] \label{thm: product_majorization} Let $A,B\in\mathbb{C}^{n\times n}$ be cramped matrices with phases in
  $(\theta_1, \theta_1+\pi)$ and $(\theta_2,\theta_2+\pi)$, respectively, where $\theta_1\in[-\pi,\delta(A))$ and $\theta_2 \in [-\pi,\delta(B))$. Let $\angle \lambda (AB)$ take
  values in $(\theta_1+\theta_2, \theta_1+\theta_2+2\pi)$. Then
\begin{align*}
\angle\lambda(AB) \prec \phi(A) + \phi(B).
\end{align*}
\end{lemma}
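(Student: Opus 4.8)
The plan is to relate the eigenvalue phases of $AB$ to a majorization relation for singular-value-type data and then transfer a known multiplicative majorization into an additive one via logarithms. The key structural fact I would exploit is the sectoral decomposition. Writing $A = T_1^* D_1 T_1$ and $B = T_2^* D_2 T_2$ with $D_1, D_2$ diagonal unitary, the phases $\phi(A)$ and $\phi(B)$ are exactly $\angle\lambda(D_1)$ and $\angle\lambda(D_2)$ arranged in the correct $\pi$-intervals. The idea is to view the eigenvalues of the product $AB$ through a similarity transformation that brings the two sectoral factors together, so that the phase information is carried by a product of two unitary matrices.

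First I would reduce $AB$ to a more tractable spectral object. Since eigenvalues are similarity invariants, $\lambda(AB) = \lambda(T_1^* D_1 T_1 T_2^* D_2 T_2) = \lambda(D_1 T_1 T_2^* D_2 T_2 T_1^*)$. Setting $S = T_1 T_2^*$, this is $\lambda(D_1 S D_2 S^*)$, a product in which the magnitudes come from $S$ and $S^*$ while the phases come from the diagonal unitaries $D_1, D_2$. The natural comparison target is the purely unitary case: the eigenvalues of the unitary matrix $D_1 \tilde S D_2 \tilde S^*$, where $\tilde S$ is the unitary factor of a polar/QR-type decomposition of $S$. I would aim to show that the \emph{arguments} of the eigenvalues of $D_1 S D_2 S^*$ are majorized by those of a unitary product, and that the latter arguments are in turn controlled by $\phi(A) + \phi(B)$ through an additive eigenvalue inequality for unitary matrices.

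The crucial analytic step is an additive majorization for the arguments of a product of two unitary matrices. Concretely, if $U, V$ are unitary with eigen-arguments $\angle\lambda(U)$ and $\angle\lambda(V)$ lying in suitable half-open intervals, then $\angle\lambda(UV) \prec \angle\lambda(U) + \angle\lambda(V)$; this is the additive (Lidskii-type) companion to the multiplicative singular value majorization \eqref{gainmajorization}, and for unitary matrices the logarithm sends the unitary spectrum to an additive setting on which classical majorization theory applies. I would invoke such a unitary Lidskii inequality, taking care that the branch choices for $\angle\lambda(AB)$ on $(\theta_1+\theta_2, \theta_1+\theta_2+2\pi)$ match the additive ranges of $\phi(A)$ and $\phi(B)$ so that no $2\pi$ ambiguity corrupts the comparison.

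The main obstacle I expect is precisely the interface between the non-unitary factor $S$ and the argument majorization: passing from $\lambda(D_1 S D_2 S^*)$ to the unitary surrogate is where the inequality (rather than equality) genuinely enters, and it must be justified that inserting the magnitudes of $S$ can only ``spread'' the eigen-arguments in the majorization sense, i.e. it enlarges the partial sums of the ordered arguments but preserves the total. Establishing this spreading rigorously—likely via a continuity/homotopy argument deforming $S$ toward its unitary part while tracking the arguments, or via a direct variational argument using the maximin characterization of phases quoted above—is the delicate heart of the proof. A secondary subtlety is bookkeeping the interval conventions: one must verify that with $\phi(A)$ in $(\theta_1,\theta_1+\pi)$, $\phi(B)$ in $(\theta_2,\theta_2+\pi)$, and $\angle\lambda(AB)$ in $(\theta_1+\theta_2,\theta_1+\theta_2+2\pi)$, the equality of total sums in the definition of majorization holds exactly, which pins down the consistent branch and rules out off-by-$2\pi$ errors.
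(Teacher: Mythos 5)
First, a point of reference: this paper does not prove Lemma~\ref{thm: product_majorization} at all---it imports it from \cite{WCKQ2018}---so your proposal has to stand on its own, and on its own it has a genuine gap at its center.

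Your opening reduction is correct: $\lambda(AB)=\lambda(D_1SD_2S^*)$ with $S=T_1T_2^*$, and the unitary product result you invoke, $\angle\lambda(UV)\prec\angle\lambda(U)+\angle\lambda(V)$ for unitary $U,V$ under suitable branch conventions, is a true classical theorem (due to Nudel'man--Schwartzman, later refined by Thompson). But note it is a theorem to be cited, not derived the way you suggest: taking logarithms does not convert a product of \emph{non-commuting} unitaries into a sum, so ``classical majorization theory'' does not apply mechanically there. The fatal issue is the bridge you need between the general case and the unitary surrogate: the claim that $\angle\lambda(D_1SD_2S^*)\prec\angle\lambda(D_1\tilde SD_2\tilde S^*)$, i.e.\ that replacing $S$ by its unitary polar factor can only enlarge the argument vector in the majorization order. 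You correctly flag this as the ``delicate heart,'' but you offer no proof of it, only the suggestion of a homotopy deforming $S$ to $\tilde S$ ``while tracking the arguments,'' or an unspecified variational argument. A homotopy cannot do the job by itself: majorization is a family of inequalities comparing the two endpoints, and continuity of the eigenvalue arguments along the path yields no comparison between endpoints unless the sorted partial sums are shown to be monotone along the deformation---which is exactly the kind of inequality the whole lemma is about, and is nowhere established.

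Worse, the bridge claim is not a reduction to anything simpler. Already in the special case $D_1=I$ it reads $\angle\lambda(SD_2S^*)\prec\angle\lambda(D_2)$; since every cramped matrix $C$ with phases $\angle\lambda(D_2)$ arises as $SD_2S^*$ (that is what the sectoral decomposition says), this is precisely the statement $\angle\lambda(C)\prec\phi(C)$, i.e.\ the lemma itself with $B=I$, which is already nontrivial. So your architecture repackages the lemma as [known unitary theorem] $+$ [unproven claim carrying essentially the full content of the lemma], rather than proving it. The only parts that are genuinely secured are the easy ones: the reduction via similarity, and the total-sum bookkeeping (which follows from $\det(AB)=\det A\,\det B$ together with $|\det S|^2>0$ contributing no argument, the branch hypotheses in the statement existing precisely to pin this down modulo $2\pi$). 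To complete a proof along your lines you would have to actually prove the polar-factor ``spreading'' inequality, and I see no reason to believe it is easier than attacking $\angle\lambda(AB)\prec\phi(A)+\phi(B)$ directly.
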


\vspace{3pt}
The above majorization relation underlies the development of a small phase theorem, much in the spirit of \eqref{gainmajorization} being the
foundation of the celebrated small gain theorem. To be more specific, recall that the singularity of matrix $I + AB$ plays an important role in
the stability analysis of feedback systems. It is straightforward to see that if $\sigma(A)$ and $\sigma(B)$ are both sufficiently small, then $I+AB$
is nonsingular. By contrast, one can observe that if
$\phi (A)$ and $\phi (B)$ are both sufficiently small in magnitudes, then $I+AB$ is nonsingular.

\section{Phase Response of MIMO LTI Systems}
Let $G$ be an $m\times m$ real rational proper stable transfer matrix, i.e.,
$G\in\mathcal{RH}^{m\times m}_\infty$. Then $\sigma(G(j\omega))$, the vector of singular values of $G(j \omega)$, is an $\mathbb{R}^m$-valued function
of the frequency, which we call the {\em magnitude response} of $G$. The $\mathcal{H}_\infty$ norm of $G$, denoted by
$\|G\|_\infty=\sup_{\omega \in \mathbb{R}} \overline{\sigma}(G(j \omega))$, is of particular importance.

Suppose $G(j \omega)$ is cramped for all $\omega \in \mathbb{R}$. Such a system is called a frequency-wise cramped system. Also, assume for simplicity that $W(G(j\omega))$ does not intersect the negative real axis for all $\omega \in \mathbb{R}$. Then
$\phi(G(j\omega))$, the vector of phases of $G(j\omega)$ with each element taking values in $(-\pi,\pi)$, is well defined as an $\mathbb{R}^m$-valued function of the frequency, which we call the {\em phase
response} of $G$. We define the $\mathcal{H}_\infty$ phase of $G$, as the counterpart to its $\mathcal{H}_\infty$ norm, to be
\begin{align*}
\Phi_\infty(G)= \displaystyle \sup_{\omega \in \mathbb{R}, \|x\| = 1} \angle x^* G(j\omega)x.
\end{align*}
Clearly, $\Phi_\infty(G)\!\leq\! \pi$.
It is noteworthy that the set of phase bounded systems
\begin{align}
\mathfrak{C}[\alpha]=\{G\in\mathcal{RH}_\infty^{m\times m}:\Phi_\infty(G)\leq \alpha\},\label{pbs}
\end{align}
where $\alpha\in[0,\pi)$, is a cone.

Having defined the phase response of $G$, we can now plot $\sigma(G(j\omega))$ and $\phi(G(j\omega))$ together to complete the MIMO Bode plot of $G$, laying the foundation of a complete MIMO frequency-domain analysis.
\begin{example}\label{example1}
The Bode plot of system
\begin{align*}
G(s)=\begin{bmatrix}\!\frac{1}{s^2+2s+200}\!&\frac{2}{s^2+2s+200}\!\vspace{3pt}
  \\\!\frac{2}{s^2+2s+200}\!&\frac{0.2s^3+0.5s^2+44.2s+24}{s^3+3s^2+202s+200}\!\end{bmatrix}
\end{align*}
is shown in Fig. \ref{mimo-response}.

\begin{figure}[htbp]
\centering
\includegraphics[scale=0.44]{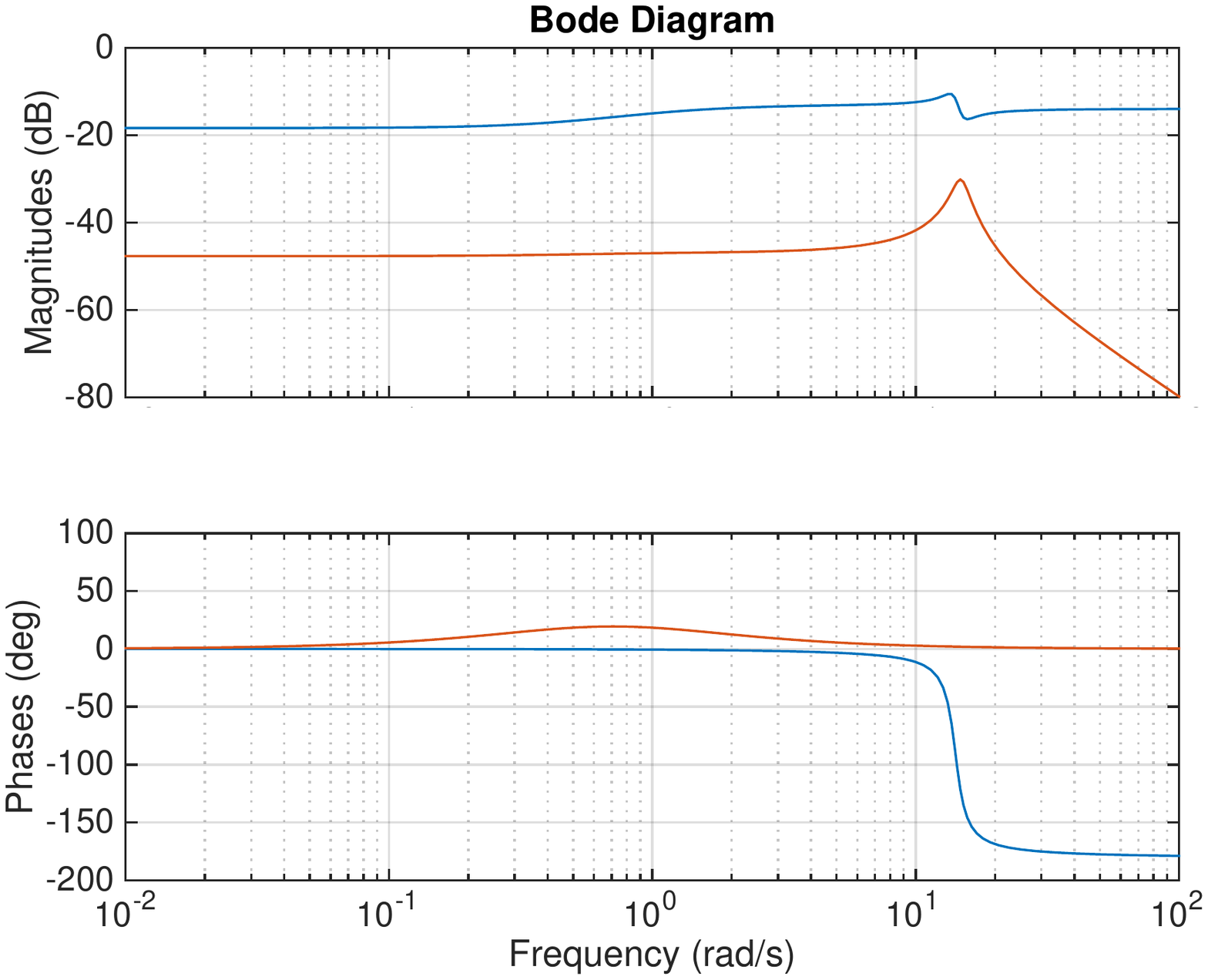}
\includegraphics[scale=0.44]{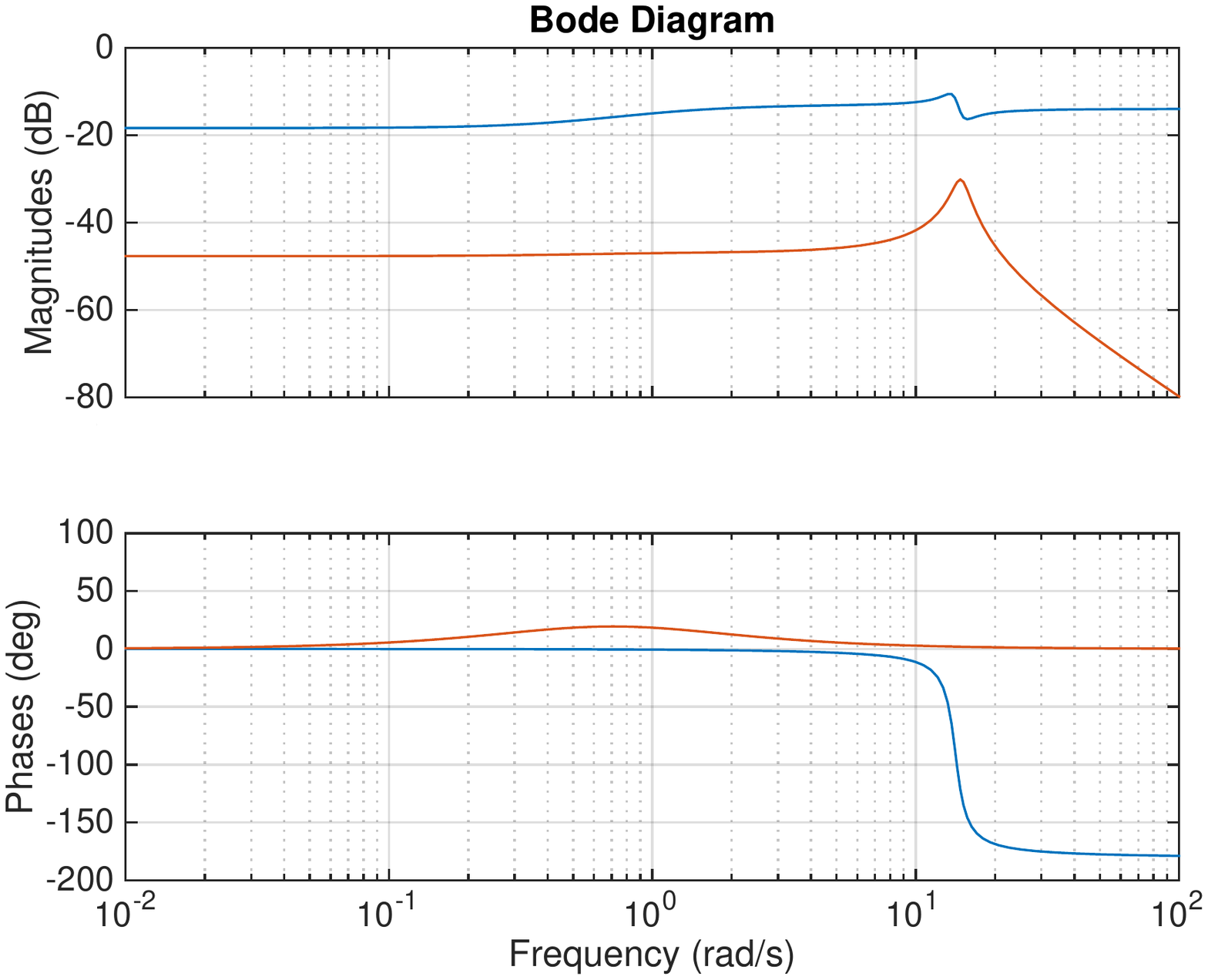}
\caption{MIMO Bode plot of a frequency-wise cramped system.}
\label{mimo-response}
\end{figure}
\end{example}

Note that the well-known notions of positive real systems \cite{Anderson1973,Brogliato,Kottenstette} and negative imaginary
systems \cite{lanzon2008stability,petersen2010csm} can be characterized using their phase responses. For simplicity, here we briefly mention the strong and strict
versions of these notions. A transfer function matrix $G \in \mathcal{RH}^{m\times m}_\infty$ is said to be strongly positive real if
$G(j\omega)+G^*(j\omega) > 0$ for all $\omega\in[-\infty,+\infty]$ \cite{LiuYao2016}. In the language of phase,
$G \in \mathcal{RH}^{m\times m}_\infty$ is strongly positive real if and only if
\[
\Phi_\infty (G) < \frac{\pi}{2}.
\]
On the other hand, a transfer function matrix $G$ is said
to be strictly negative imaginary if $(G(j\omega)-G^*(j \omega))/j< 0$ for all $\omega \in (0, \infty)$ \cite{lanzon2008stability}. This is equivalent
to
\[
[\underline{\phi}(G(j\omega)), \overline{\phi} (G(j \omega))] \subset (-\pi, 0)
\]
for all $\omega \in (0, \infty)$.
The phase concept of MIMO LTI systems gives a way to unify these concepts, together with of course the trivial SISO systems phase, and more. The system shown in Fig. \ref{mimo-response} is neither positive real nor negative imaginary but it has well-defined phase response.

\section{Half-cramped Systems}
Let $G\in\mathcal{RH}^{m\times m}_\infty$. Then, $G(j\omega)$ is conjugate symmetric, i.e.,
\[
G(-j\omega)=\overline{G(j\omega)},
\]
and hence $W(G(j\omega))$ and $W(G(-j\omega))$ are symmetric about the real axis. This property hints that in dealing with many problems such as feedback stability, one only has to examine the frequency response for nonnegative frequency, while the other half frequency range will be automatically taken care of due to the symmetry. Following this hint, we define half-cramped systems, and provide a time-domain interpretation for such systems.

A system $G$ is said to be half-cramped if
\[
\mathrm{cl.\;Co}\{W(G(j\omega)),\omega\geq 0\}
\]
is contained in an open half plane and does not intersect the negative real axis,
where cl. denotes closure and Co denotes convex hull.

Whether a system is half-cramped or not can be read out from its phase plot. For instance, the system in Example \ref{example1} is not half-cramped as its positive frequency phase response has a spread larger than $\pi$. Below we give an example of a half-cramped system.

\begin{example}
Consider the system
\begin{align*}
G(s)=\begin{bmatrix}\frac{s^3+6.5s^2+10s+6}{s^3+1.5s^2+1.5s+1}&\frac{s+2}{s+1}\vspace{3pt}\\\frac{s+2}{s+1}&\frac{s+2}{s+1}\end{bmatrix}.
\end{align*}
Its Bode plot is shown in Fig. 3, from which one can easily see that the system is half-cramped, but is neither positive real nor negative imaginary.
\begin{figure}[htbp]
\centering
\includegraphics[scale=0.44]{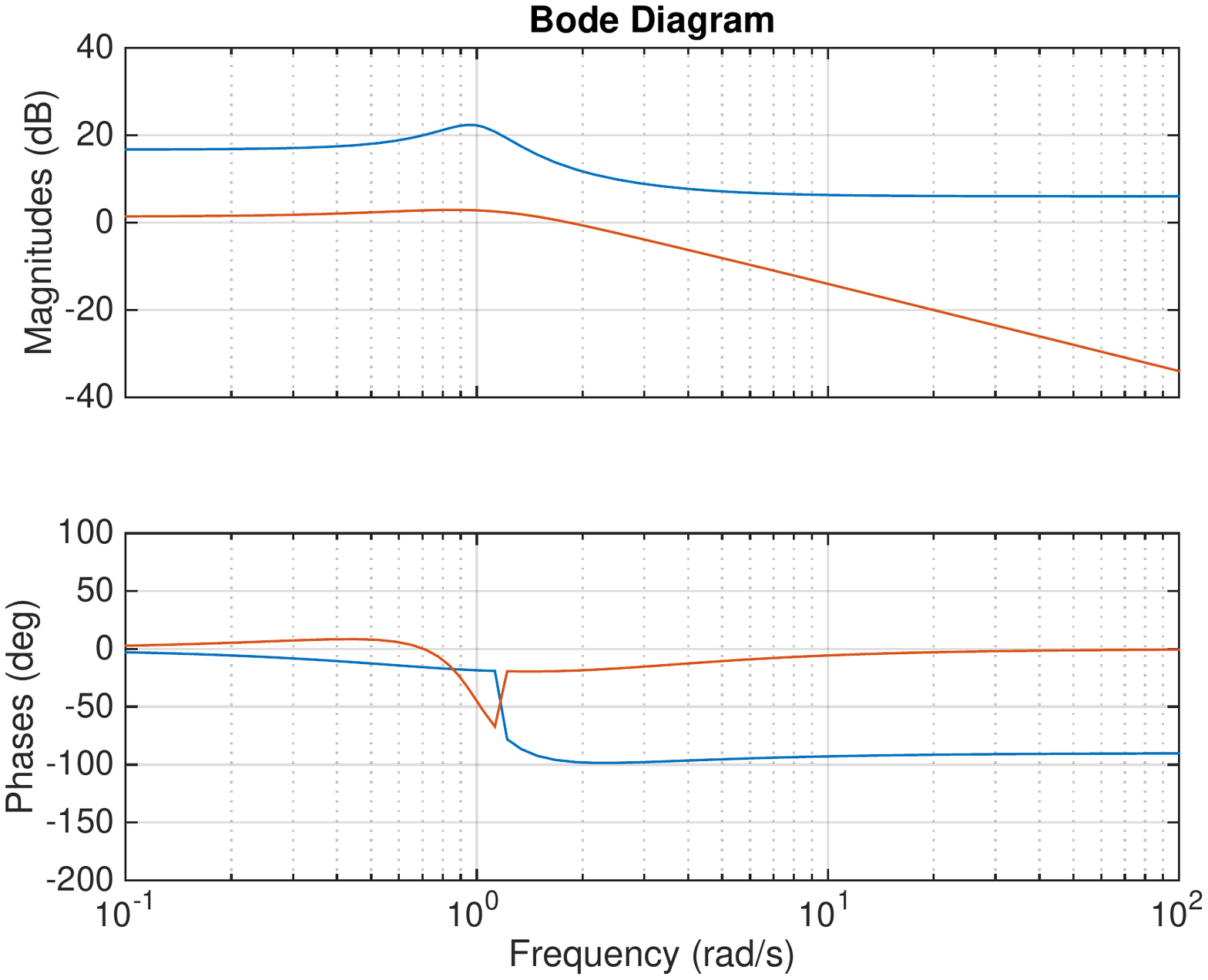}
\includegraphics[scale=0.44]{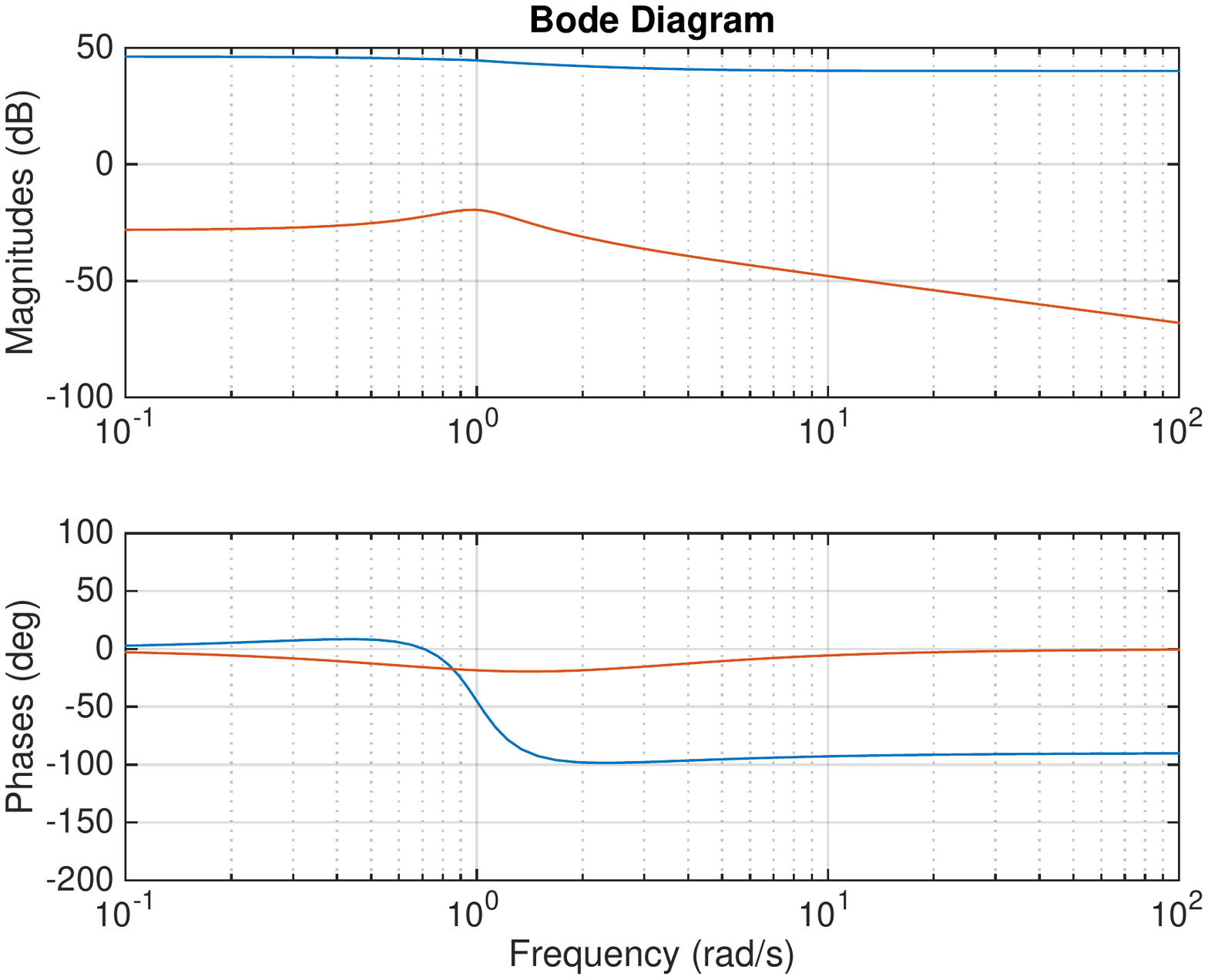}
\caption{MIMO Bode plot of a half-cramped system.}
\label{HC}
\end{figure}
\end{example}

Interestingly, there is a nice time-domain interpretation for half-cramped systems. For preparation, we briefly introduce some background knowledge on signal spaces and Hilbert transform. The Hilbert transform has been used extensively in signal processing, especially in the time-frequency domain analysis. It has also been applied in the control field, mostly in gain-phase relationship and system identification, etc. We refer interested readers to \cite{Hahn} for more details.

Let $\mathcal{F}$ be the usual Fourier transform on $\mathcal{L}^T_2(-\infty, \infty)$, the Hilbert space of complex-valued bilateral time functions
\[
[\mathcal{F} x ] (j\omega) = \frac{1}{\sqrt{2\pi}} \int_{-\infty}^{\infty} x(t) e^{-j\omega t} dt .
\]
Note that $\mathcal{F}$ is an isometry onto $\mathcal{L}^\Omega_2(-\infty, \infty)$, the Hilbert space of complex-valued bilateral frequency functions.
If we decompose $\mathcal{L}_2^\Omega (-\infty, \infty)$ into a positive frequency signal space and a negative frequency signal space as
\[
\mathcal{L}_2^\Omega(-\infty, \infty)= \mathcal{L}_2^\Omega(0, \infty) \oplus
 \mathcal{L}_2^\Omega(-\infty, 0),
\]
then clearly this is an orthogonal decomposition. Let $P$ be the orthogonal projection onto
$\mathcal{L}^\Omega_2 (0,\infty)$. Then we naturally have the orthogonal decomposition
\[
\mathcal{L}_2^T(-\infty, \infty)= \mathcal{F}^{-1} \mathcal{L}_2^\Omega(0, \infty) \oplus
\mathcal{F}^{-1} \mathcal{L}_2^\Omega(-\infty, 0).
\]
Let us call the first space above $\mathcal{A}$ and hence the second space $\mathcal{A}^\perp$. Let $Q$ be the orthogonal projection onto $\mathcal{A}$. Then the commutative diagram in Fig. \ref{commutativediagram} gives a complete picture of the relationships among these spaces. Recall the Hilbert transform $\mathcal{H}: \mathcal{L}^T_2(-\infty, \infty)  \rightarrow \mathcal{L}^T_2(-\infty, \infty)$ defined as
\[
[\mathcal{H} x](t)= \frac{1}{\pi}\int_{-\infty}^\infty \frac{x(\tau)}{t - \tau} \, d\tau.
\]
It then turns out that $Qx=\frac{1}{2} (x+j \mathcal{H} x)$ and $(I-Q)x=\frac{1}{2} (x-j \mathcal{H} x)$, the analytic part and the skew-analytic part of $x$ respectively.

\setlength{\unitlength}{0.98mm}
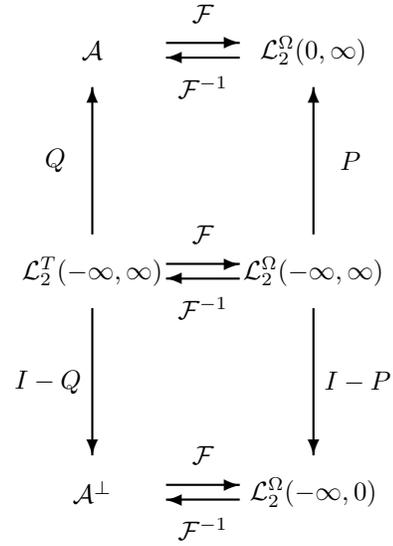
\begin{figure}[h!]
\begin{center}
\begin{picture}(40,70)
\thicklines
\put(25,64){\vector(-1,0){10}}
\put(15,66){\vector(1,0){10}}
\put(35,45){\makebox(10,10){$P$}}
\put(-5,45){\makebox(10,10){$Q$}}
\put(35,40){\vector(0,1){20}}
\put(30,60){\makebox(10,10){$\mathcal{L}^\Omega_2(0, \infty)$}}
\put(15,65){\makebox(10,10){$\mathcal{F}$}}
\put(15,55){\makebox(10,10){$\mathcal{F}^{-1}$}}
\put(0,60){\makebox(10,10){$\mathcal{A}$}}
\put(5,40){\vector(0,1){20}}
\put(0,30){\makebox(10,10){$\mathcal{L}^T_2(-\infty, \infty)$}}
\put(5,30){\vector(0,-1){20}}
\put(-6,15){\makebox(10,10){$I-Q$}}
\put(0,0){\makebox(10,10){$\mathcal{A}^\perp$}}
\put(25,34){\vector(-1,0){10}}
\put(15,36){\vector(1,0){10}}
\put(15,35){\makebox(10,10){$\mathcal{F}$}}
\put(15,25){\makebox(10,10){$\mathcal{F}^{-1}$}}
\put(25,4){\vector(-1,0){10}}
\put(15,6){\vector(1,0){10}}
\put(15,5){\makebox(10,10){$\mathcal{F}$}}
\put(15,-5){\makebox(10,10){$\mathcal{F}^{-1}$}}
\put(30,30){\makebox(10,10){$\mathcal{L}^\Omega_2(-\infty,\infty)$}}
\put(35,30){\vector(0,-1){20}}
\put(36,15){\makebox(10,10){$I-P$}}
\put(30,0){\makebox(10,10){$\mathcal{L}^\Omega_2(-\infty, 0)$}}
\end{picture}
\end{center}
\caption{A commutative diagram.}
\label{commutativediagram}
\end{figure}

Now let $\mathbf{G}: \mathcal{L}^T_2(-\infty, \infty)\rightarrow\mathcal{L}^T_2(-\infty, \infty)$ be the linear operator corresponding to $G(s)\in\mathcal{RH}_\infty$. Clearly, both $\mathcal{A}$ and $\mathcal{A}^\perp$ are invariant subspaces of $\mathbf{G}$. We define the positive frequency numerical range and negative frequency numerical range as
\begin{align*}
W_+(\mathbf{G})&:= \!\{\langle Qu, \mathbf{G}u \rangle:\ u \!\in \mathcal{L}_2^T (-\infty, \infty),\|u\|_2=1\}\\
W_-(\mathbf{G})&:= \!\{\langle (I-Q)u, \mathbf{G}u \rangle:\ u \!\in \mathcal{L}_2^T (-\infty, \infty),\|u\|_2=1\}
\end{align*}
respectively. It can be easily seen that $W_+(\mathbf{G})$ and $W_-(\mathbf{G})$ are symmetric with respect to the real axis.
Also, note that
\begin{align*}
\langle Qu, \mathbf{G}u \rangle&=\int_{-\infty}^{+\infty}[Qu]^*(t)[\mathbf{G}u](t) dt\\
&=\int_{0}^{+\infty} [\mathcal{F} u]^*(j\omega) G(j\omega)[\mathcal{F} u](j\omega)d\omega,
\end{align*}
which suggests that
\begin{align*}
\mathrm{cl.}\ W_+(\mathbf{G})\subset\mathrm{cl.\;Co}\{W(G(j\omega)),\omega\geq 0\}.
\end{align*}
In fact, one can further show
\begin{align*}
\mathrm{cl.}\ W_+(\mathbf{G})= \mathrm{cl.\;Co}\{W(G(j\omega)),\omega\geq 0\},
\end{align*}
and thus
$\Phi_\infty(G)=\max \{\sup_{z \in W_+(G)} \angle z , \sup_{z \in W_-(G)} \angle z \}$.
The detailed proof is omitted for brevity and will be available in a longer version of this paper.

\section{Small Phase Theorem}
Suppose $G$ and $H$ are $m\times m$ real rational proper transfer function matrices. The feedback interconnection of $G$ and $H$, as depicted in Fig. \ref{fdbk}, is said to
be stable if the Gang of Four matrix
\begin{align*}
G\#H=\begin{bmatrix}
  (I + HG)^{-1} & (I + HG)^{-1}H\\
  G(I + HG)^{-1} & G(I + HG)^{-1}H
\end{bmatrix}
\end{align*}
is stable, i.e., $G\#H \in \mathcal{RH}^{2m \times 2m}_\infty$.

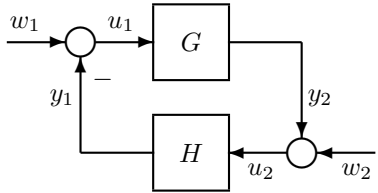
\begin{figure}[htbp]
\begin{center}
\begin{picture}(50,25)
\thicklines \put(0,20){\vector(1,0){8}} \put(10,20){\circle{4}}
\put(12,20){\vector(1,0){8}} \put(20,15){\framebox(10,10){$G$}}
\put(30,20){\line(1,0){10}} \put(40,20){\vector(0,-1){13}}
\put(38,5){\vector(-1,0){8}} \put(40,5){\circle{4}}
\put(50,5){\vector(-1,0){8}} \put(20,0){\framebox(10,10){$H$}}
\put(20,5){\line(-1,0){10}} \put(10,5){\vector(0,1){13}}
\put(5,10){\makebox(5,5){$y_1$}} \put(40,10){\makebox(5,5){$y_2$}}
\put(0,20){\makebox(5,5){$w_1$}} \put(45,0){\makebox(5,5){$w_2$}}
\put(13,20){\makebox(5,5){$u_1$}} \put(32,0){\makebox(5,5){$u_2$}}
\put(10,10){\makebox(6,10){$-$}}
\end{picture}
\caption{A standard feedback system.}
\label{fdbk}
\end{center}
\end{figure}

The celebrated small gain theorem \cite{Zhou,LiuYao2016} is one of the most used results in robust control theory over the past half a century. A version of it states that for $G, H \in \mathcal{RH}^{m\times m}_\infty$, the feedback system $G\#H$ is stable if
\begin{align*}
\overline{\sigma}(G(j\omega)) \overline{\sigma}(H(j\omega))< 1
\end{align*}
for all $\omega \in \mathbb{R}$.

There was an attempt to formulate a small phase theorem by using phases defined from the matrix polar decomposition \cite{Macfarlane1981}. However, the condition therein involves both phase and gain information and thus deviates from the initial purpose of having a phase counterpart of the small gain theorem.

Armed with the new definition of matrix
phases $\phi(C)$, we work out a version of the small phase theorem.

\begin{theorem}[Small phase theorem]\label{smallphase}
For frequency-wise cramped $G, H \in \mathcal{RH}^{m\times m}_\infty$, the feedback system $G\#H$ is stable if
\begin{align}
\overline{\phi} (G(j\omega))+ \overline{\phi}(H(j\omega)) < \pi \label{spi}
\end{align}
for all $\omega \in \mathbb{R}$.
\end{theorem}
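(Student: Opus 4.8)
The plan is to reduce feedback stability to a frequency-domain nonsingularity condition and then translate the phase bound \eqref{spi} into that condition through the product majorization of Lemma \ref{thm: product_majorization}. Since $G,H\in\mathcal{RH}^{m\times m}_\infty$, the Gang of Four $G\#H$ lies in $\mathcal{RH}^{2m\times 2m}_\infty$ if and only if $(I+HG)^{-1}\in\mathcal{RH}^{m\times m}_\infty$, which by the standard return-difference characterization is equivalent to $\det(I+H(s)G(s))$ having no zeros in the closed right half plane, infinity included. First I would introduce the homotopy $f_\tau(s)=\det(I+\tau H(s)G(s))$, $\tau\in[0,1]$, noting $f_0\equiv 1$ and $f_1=\det(I+HG)$, with each $f_\tau$ analytic in the open right half plane and continuous on its closure because $HG\in\mathcal{RH}_\infty$.

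Next I would invoke an argument-principle (winding number) argument in the spirit of the small gain theorem: the number of closed-right-half-plane zeros of $f_\tau$ equals the net encirclement of the origin by the Nyquist plot $\omega\mapsto f_\tau(j\omega)$. Provided $f_\tau(j\omega)\neq 0$ for every $\omega\in\mathbb{R}\cup\{\infty\}$ and every $\tau\in[0,1]$, this integer-valued quantity varies continuously in $\tau$ and is therefore constant; since it vanishes at $\tau=0$, it vanishes at $\tau=1$, giving $(I+HG)^{-1}\in\mathcal{RH}_\infty$. The whole argument thus hinges on the single claim that $f_\tau(j\omega)\neq 0$ for all $\omega$ and $\tau$, equivalently that $H(j\omega)G(j\omega)$ has no eigenvalue in $(-\infty,-1]$; it suffices to show it has no eigenvalue on the negative real axis at all.

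To establish this, fix $\omega$ and set $A=G(j\omega)$, $B=H(j\omega)$, both cramped by hypothesis. Choosing phase intervals $(\theta_1,\theta_1+\pi)$ and $(\theta_2,\theta_2+\pi)$ containing the phases of $A$ and $B$, Lemma \ref{thm: product_majorization} yields $\angle\lambda(AB)\prec\phi(A)+\phi(B)$ with $\angle\lambda(AB)$ read in $(\theta_1+\theta_2,\theta_1+\theta_2+2\pi)$. Since $\phi(A)$ and $\phi(B)$ are non-increasing, so is their sum, whose extreme entries are $\overline{\phi}(A)+\overline{\phi}(B)$ and $\underline{\phi}(A)+\underline{\phi}(B)$; majorization (using the first partial sum, and the total-sum equality together with the $(n-1)$st partial sum) then forces every eigenvalue angle to satisfy
\[
\underline{\phi}(A)+\underline{\phi}(B)\le \angle\lambda_i(AB)\le \overline{\phi}(A)+\overline{\phi}(B).
\]
Condition \eqref{spi} supplies the upper bound $\overline{\phi}(A)+\overline{\phi}(B)<\pi$ directly, while evaluating \eqref{spi} at $-\omega$ and using conjugate symmetry $G(-j\omega)=\overline{G(j\omega)}$, so that $\overline{\phi}(G(-j\omega))=-\underline{\phi}(G(j\omega))$ and likewise for $H$, supplies $\underline{\phi}(A)+\underline{\phi}(B)>-\pi$. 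Hence every $\angle\lambda_i(AB)\in(-\pi,\pi)$, so $H(j\omega)G(j\omega)$, whose spectrum coincides with that of $AB$, has no eigenvalue on the negative real axis, which is exactly the required claim.

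I expect the main obstacle to be the bookkeeping of angle branches: Lemma \ref{thm: product_majorization} pins $\angle\lambda(AB)$ to a specific $2\pi$-window determined by $\theta_1,\theta_2$, and I must verify that this window is consistent with reading the eigenvalue angles inside $(-\pi,\pi)$, so that the two-sided bound genuinely excludes the negative real axis rather than a rotated copy of it; the freedom afforded by the cramped hypothesis (field angle below $\pi$, hence admissible choices of $\theta_1,\theta_2$) should make this work, but it requires care. A secondary technical point is the continuity and well-definedness of the winding number at $\omega=\infty$, which needs properness of $HG$ and the nonsingularity of $I+H(\infty)G(\infty)$; both are subsumed by treating $\omega=\infty$ on equal footing in the nonsingularity claim above. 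The remaining steps are routine.
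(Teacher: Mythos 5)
Your proof has the same skeleton as the paper's: reduce stability of $G\#H$ to nonvanishing of the return-difference determinant on the closed right half plane, then get the two-sided eigenvalue-angle bound from Lemma \ref{thm: product_majorization} combined with conjugate symmetry (evaluating (\ref{spi}) at $-\omega$ to obtain $\underline{\phi}(G(j\omega))+\underline{\phi}(H(j\omega))>-\pi$); that part, including how you extract both extreme bounds from the majorization, is exactly the paper's argument. Where you genuinely differ is the homotopy. The paper deforms \emph{both plants} through convex combinations $\tau G+(1-\tau)I$ and $\tau H+(1-\tau)I$, and therefore needs Lemma \ref{convexcone} (convexity of the phase cone) to guarantee that the phase bounds persist along the path; you instead scale the \emph{loop gain}, $f_\tau=\det(I+\tau HG)$, so that nonvanishing on the imaginary axis for every $\tau\in(0,1]$ follows from the single spectral fact that $H(j\omega)G(j\omega)$ has no negative real eigenvalue --- no convexity lemma needed. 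This is the classical eigenvalue-loci/multivariable-Nyquist deformation; it is arguably cleaner (one lemma fewer), and you also make the continuity step precise via the winding number, where the paper only says ``by continuity.''

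There is, however, one genuine gap: the point $\omega=\infty$. You claim the nonsingularity of $I+H(\infty)G(\infty)$ is ``subsumed by treating $\omega=\infty$ on equal footing,'' but the hypotheses give crampedness and (\ref{spi}) only for $\omega\in\mathbb{R}$, and neither extends to $\infty$ by any limiting argument (strict inequalities degrade to non-strict ones in the limit). Concretely, take $G(s)=\frac{1-s}{1+s}$ and $H=1$: every finite-frequency hypothesis holds, since $\overline{\phi}(G(j\omega))=-2\arctan\omega\in(-\pi,0]$, yet $1+G(\infty)H(\infty)=0$ and $(1+HG)^{-1}=\frac{s+1}{2}$ is improper, so $G\#H$ is not stable. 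Thus the nonsingularity at infinity cannot be derived from the stated hypotheses; it must be \emph{assumed}. The paper's proof does exactly this, invoking ``well-posedness of the feedback system'' to assert $\det[I+G(\infty)H(\infty)]\neq 0$, and your proof needs the same standing assumption (or, alternatively, crampedness and the phase bound imposed at $\omega=\infty$ as well). With that single correction your argument goes through.
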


\begin{proof}
Since $G, H \in \mathcal{RH}_\infty$, it follows that $G\#H$ is stable if and only if $(I + HG)^{-1}\in\mathcal{RH}_\infty$. Hence, it suffices to show that $\det[I+G(s)H(s)]\neq 0$ for all $s\in\mathbb{C}^+ \cup \{\infty\}$,
where $\mathbb{C}^+$ denotes the closed right half plane.

To this end, observe that when (\ref{spi}) is satisfied, by symmetry, the inequality
$\underline{\phi} (G(j\omega))+ \underline{\phi}(H(j\omega)) > -\pi$
also holds for all $\omega\in\mathbb{R}$. Applying Lemma  \ref{thm: product_majorization}, we have
\begin{multline}
\underline{\phi} (G(j\omega))+ \underline{\phi}(H(j\omega))\leq \angle\lambda_i(G(j\omega)H(j\omega))\\\leq \overline{\phi} (G(j\omega))+ \overline{\phi}(H(j\omega))\nonumber
\end{multline}
and thus
$-\pi<\angle\lambda_i(G(j\omega)H(j\omega))<\pi$
for all $\omega\in\mathbb{R},i=1,2,\dots,m$.
Now, let $\tau$ be an arbitrary number in $[0,1]$. From Lemma \ref{convexcone}, it follows that
\begin{align*}
&\overline{\phi}(\tau G(j\omega)+(1-\tau)I)\leq \overline{\phi}(G(j\omega)),\\
&\overline{\phi}(\tau H(j\omega)+(1-\tau)I)\leq \overline{\phi}(H(j\omega)),\\
&\underline{\phi}(\tau G(j\omega)+(1-\tau)I)\geq \underline{\phi}(G(j\omega)),\\
&\underline{\phi}(\tau H(j\omega)+(1-\tau)I)\geq \underline{\phi}(H(j\omega)),
\end{align*}
for all $\omega\in\mathbb{R}$. Then, following the same arguments as above, we can show
\begin{align*}
-\pi< \angle\lambda_i[(\tau G(j\omega)\!+\!(1-\tau)I)(\tau H(j\omega)\!+\!(1-\tau)I)]<\pi,
\end{align*}
which in turn yields that
\begin{align*}
\det[I+(\tau G(j\omega)\!+\!(1-\tau)I)(\tau H(j\omega)\!+\!(1-\tau)I)]\neq 0
\end{align*}
for all $\omega\in\mathbb{R}$. Since when $\tau=0$,
\begin{align*}
\det[I+(\tau G(s)\!+\!(1-\tau)I)(\tau H(s)\!+\!(1-\tau)I)]\neq 0
\end{align*}
for all $s\in\mathbb{C}^+$, it follows by continuity that the same holds for all $\tau\!\in\![0,1]$.
Particularly, when $\tau\!=\!1$, there holds $\det[I\!+\!G(s)H(s)]\neq 0$ for all $s\!\in\!\mathbb{C}^+$. Finally, note that $\det[I\!+\!G(\infty)H(\infty)]\neq 0$ due to the well-posedness of the feedback system. This completes the proof.
\end{proof}

We wish to mention that the small phase theorem can also be established via IQCs. Specifically, when the condition (\ref{spi}) is satisfied, one can find a dynamic multiplier of the form
\begin{align*}
\Pi(s)=\begin{bmatrix}0&e^{j\theta(s)}\\e^{-j\theta(s)}&0\end{bmatrix}
\end{align*}
so that $\Pi(s)\in \mathcal{L}_\infty$ is continuous on the imaginary axis and the following quadratic constraints
\begin{align*}
&\begin{bmatrix}I\\G(j\omega)\end{bmatrix}^*\Pi(j\omega)\begin{bmatrix}I\\G(j\omega)\end{bmatrix}\geq 0,\\
&\begin{bmatrix}H(j\omega)\\I\end{bmatrix}^*\Pi(j\omega)\begin{bmatrix}H(j\omega)\\I\end{bmatrix}<0
\end{align*}
are satisfied for all $\omega\!\in\!\mathbb{R}$. The feedback stability then follows from the result in \cite{MR}. From this perspective, the small phase theorem provides a nice phasic interpretation of the condition obtained from IQCs.

The small phase theorem generalizes a stronger version of the passivity theorem \cite{DV1975,LiuYao2016}, which states that
for $G, H \!\in\! \mathcal{RH}^{m\times m}_\infty$, the feedback system $G\#H$ is stable if $G$ and $H$ are strongly positive real.

Note that the small gain theorem provides a quantifiable tradeoff between the gains of $G$ and $H$, while the above small phase theorem does the same with respect to the phases of $G$ and $H$. In the literature, the notions of input feedforward passivity index and output feedback passivity index \cite{Vidyasagar,Wen,BaoLee,Kottenstette} have been used to characterize the tradeoff between
the surplus and deficit of passivity in open-loop systems. It is our belief that the concept of MIMO system phases is more suited to this
task. Specifically, $\frac{\pi}{2}-\Phi_\infty(G)$ gives a natural measure of passivity of system $G$, which we call the angular passivity index. The small phase theorem above implies that if the sum of the angular passivity indexes of $G$ and $H$ are positive, then $G\#H$ is stable. In addition, one can see that $\pi-\Phi_\infty(GH)$
yields a natural phase stability margin of $G\#H$.

It is well known that the condition given in the small gain theorem is necessary in the following sense \cite{Zhou}.
Suppose $G\!\in\!\mathcal{RH}^{m\times m}_\infty$ and let
$\mathfrak{B}[\gamma]\!=\!\{H\in\mathcal{RH}^{m\times m}_\infty: \!\|H\|_\infty \!\leq\! \gamma\}$, where $\gamma>0$. Then, the feedback system $G\#H$ is stable for all $H\in\mathfrak{B}[\gamma]$ if and only if $\|G\|_\infty < \frac{1}{\gamma}$.

Regarding the necessity of small phase theorem, we observe evidences supporting the following conjecture.
Recall the set of phase bounded systems $\mathfrak{C}[\alpha]$ defined in (\ref{pbs}), where $\alpha \in[0,\pi)$.
\begin{conjecture}[Small phase theorem with necessity]
Suppose $G\!\in\!\mathcal{RH}^{m\times m}_\infty$. Then, the feedback system $G\#H$ is stable for all $H\in\mathfrak{C}[\alpha]$ if and only if $\Phi_\infty(G)<\pi-\alpha$.
\end{conjecture}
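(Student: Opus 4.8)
The plan is to treat the two implications separately, with the forward (sufficiency) direction being a direct corollary of the small phase theorem and the converse (necessity) requiring an explicit worst-case construction. For sufficiency, suppose $\Phi_\infty(G)<\pi-\alpha$. Since $\Phi_\infty(G)=\sup_\omega \overline{\phi}(G(j\omega))$, for every frequency we have $\overline{\phi}(G(j\omega))\le \Phi_\infty(G)<\pi-\alpha$, while membership in $\mathfrak{C}[\alpha]$ gives $\overline{\phi}(H(j\omega))\le \Phi_\infty(H)\le\alpha$ and guarantees that both $G$ and $H$ are frequency-wise cramped. Adding these, condition \eqref{spi} holds at every $\omega$, so Theorem \ref{smallphase} yields stability of $G\#H$ for all $H\in\mathfrak{C}[\alpha]$. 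Sufficiency is therefore immediate.

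For necessity I would argue by contraposition: assuming $\Phi_\infty(G)\ge \pi-\alpha$, I would construct a destabilizing $H\in\mathfrak{C}[\alpha]$. Since $G\in\mathcal{RH}_\infty$ extends continuously to $\omega=\infty$ and the unit sphere is compact, the supremum defining $\Phi_\infty(G)$ is (in the attained case) reached at some $\omega_0$, giving a unit vector $x_0$ with $\phi_1:=\angle x_0^*G(j\omega_0)x_0=\overline{\phi}(G(j\omega_0))\ge \pi-\alpha$, and $\phi_1<\pi$ because phases take values in $(-\pi,\pi)$. Taking a sectoral decomposition $G(j\omega_0)=T^*DT$ with $D=\mathrm{diag}(e^{i\phi_1},\dots,e^{i\phi_m})$, I set $\psi:=\pi-\phi_1\in(0,\alpha]$, $K:=\mathrm{diag}(e^{i\psi},1,\dots,1)$, and $H_0:=T^{-1}KT^{-*}$. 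Since $H_0=S^*KS$ with $S=T^{-*}$ is a $*$-congruence of the unitary diagonal $K$, it is cramped with the same phases $\{\psi,0,\dots,0\}\subset[-\alpha,\alpha]$, so $H_0\in\mathcal{C}[-\alpha,\alpha]$; moreover $\lambda(G(j\omega_0)H_0)=\lambda(DK)$ contains $e^{i(\phi_1+\psi)}=-1$, so $I+G(j\omega_0)H_0$ is singular. This pointwise step is clean.

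The remaining and, I expect, genuinely hard step is realization: to produce a real-rational $H\in\mathcal{RH}_\infty$ with $H(j\omega_0)=H_0$ (hence $H(-j\omega_0)=\overline{H_0}$) whose matrix phases stay in $[-\alpha,\alpha]$ at \emph{every} frequency, so that $H\in\mathfrak{C}[\alpha]$ while $\det(I+H(j\omega_0)G(j\omega_0))=0$ forces $(I+HG)^{-1}\notin\mathcal{RH}_\infty$ and thus instability of $G\#H$, closing the contraposition. I expect this to be the main obstacle and the reason the statement is posed as a conjecture. The obvious device of dialing in $H_0$ through scalar all-pass factors fails, because an all-pass sweeps its phase across $(-\pi,0)$ as $\omega$ varies, so it can hit the prescribed value at $\omega_0$ only at the cost of violating the global bound $\Phi_\infty(H)\le\alpha$. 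What is needed instead is an interpolation by phase-bounded building blocks, combined with constant congruences, that simultaneously matches the matrix value at $j\omega_0$ and keeps all phases within $[-\alpha,\alpha]$ across all frequencies. Two further complications feed into this: by Lemma \ref{convexcone} the cone $\mathcal{C}[-\alpha,\alpha]$ is convex only for $\alpha<\pi/2$, so standard convex interpolation arguments are unavailable once $\alpha\ge\pi/2$; and one must handle the case where the supremum in $\Phi_\infty(G)$ is only approached, not attained (and where crampedness may degrade in the limit), since then $\psi$ can be driven to $\alpha$ only in a limiting sense. Resolving the realization together with these edge cases is precisely what would upgrade the conjecture to a theorem.
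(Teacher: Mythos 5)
You have not proved the statement, and neither does the paper: this is posed as an open conjecture, with no proof supplied beyond the remark that the SISO case follows from the Nyquist criterion and that the MIMO case ``appears technically challenging.'' So there is no paper proof to compare your necessity argument against. Your sufficiency half is correct and is indeed the routine half: since $\Phi_\infty(G)=\sup_\omega \overline{\phi}(G(j\omega))$ and conjugate symmetry of real-rational systems makes the lower phase bound automatic, the hypothesis $\Phi_\infty(G)<\pi-\alpha$ together with $\Phi_\infty(H)\leq\alpha$ gives $\overline{\phi}(G(j\omega))+\overline{\phi}(H(j\omega))<\pi$ at every frequency, and Theorem \ref{smallphase} applies. (One pedantic caveat: as written in the paper, the definition of $\mathfrak{C}[\alpha]$ does not literally force frequency-wise crampedness --- a numerical range touching the origin from the right half plane is consistent with a small phase supremum --- so you are implicitly using the paper's intended reading that members of $\mathfrak{C}[\alpha]$ have well-defined phase responses.)

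The genuine gap is exactly where you say it is, and it is the entire content of the conjecture. Your frequency-wise construction is sound: the congruence $H_0=T^{-1}KT^{-*}$ has phases $\{\psi,0,\dots,0\}\subset[0,\alpha]$, and $\lambda(G(j\omega_0)H_0)=\lambda(DK)$ contains $-1$, so $I+G(j\omega_0)H_0$ is singular, which would indeed rule out $(I+HG)^{-1}\in\mathcal{RH}_\infty$. But a constant complex matrix $H_0$ is not an admissible destabilizer: the conjecture quantifies over real-rational $H\in\mathfrak{C}[\alpha]$, and producing an $H\in\mathcal{RH}_\infty^{m\times m}$ that interpolates $H(j\omega_0)=H_0$ (and $H(-j\omega_0)=\overline{H_0}$) while keeping \emph{all} matrix phases in $[-\alpha,\alpha]$ at \emph{all} frequencies is an unsolved matrix-valued phase-constrained interpolation problem --- the MIMO analogue of what the Nyquist argument does for free in the SISO case. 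Your diagnosis of why naive devices fail (all-pass factors sweep phase globally; Lemma \ref{convexcone} gives convexity of the phase cone only below $\pi/2$, blocking convex-combination arguments precisely in the interesting range $\alpha\geq\pi/2$; the supremum may only be approached) is accurate, but diagnosing the obstruction is not the same as overcoming it. As it stands, your proposal proves one implication and reduces the other to an open problem; it therefore does not establish the statement, which remains a conjecture.
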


Evidently, this conjecture holds in the SISO case in light of the Nyquist stability criterion.
A rigorous proof in the MIMO case appears technically challenging and is under our current investigation.

\section{State-space Conditions for Phase Bounded Systems}
The $\mathcal{H}_\infty$ norm of an LTI system can be determined by the well-known bounded real lemma. The efficient computation of $\mathcal{H}_\infty$ norm is specifically useful as evidenced in small gain theorem and facilitates robust control design.

The bounded real lemma \cite{Zhou} states that for $G\!\in\!\mathcal{RH}_\infty^{m\times m}$ with a minimal realization $\left[\begin{array}{c|c}A & B \\ \hline C & D\end{array} \right]$, $\|G\|_{\infty}\!<\!\gamma$ if and only if there exists $X>0$ satisfying the
  LMI
\begin{align*}
\begin{bmatrix}A'X+XA&XB&C'\\B'X&-\gamma I &D'\\C&D&-\gamma I\end{bmatrix}<0.
\end{align*}

One would naturally wish to see an analogous state-space condition for phase bounded systems. It is equally important to have an LMI characterization for a system $G$ satisfying $\Phi_\infty(G)\!<\!\alpha$, where $\alpha\in(0,\pi]$. Along this direction, we obtain a sectored real lemma, a natural counterpart of the bounded real lemma. Before proceeding, we introduce some preliminary knowledge on KYP lemma and generalized KYP lemma.

\subsection{KYP lemma and generalized KYP lemma}
The well known KYP lemma builds the equivalence between infinite many frequency domain inequalities over the entire frequency range and a finite dimensional LMI.

\begin{lemma}[KYP lemma \cite{LiuYao2016}]
Let $A\!\in\!\mathbb{C}^{n\times n}$, $B\!\in\!\mathbb{C}^{n\times m}$, $M\!=\!M^*\in\mathbb{C}^{(n+m)\times (n+m)}$. Assume that $A$ has no eigenvalues on the imaginary axis. Then the inequality
\begin{align*}
\begin{bmatrix}(j\omega I-A)^{-1}B\\I\end{bmatrix}^*M\begin{bmatrix}(j\omega I-A)^{-1}B\\I\end{bmatrix}<0
\end{align*}
holds for all $\omega\in\mathbb{R}\cup\{\infty\}$ if and only if there exists a Hermitian matrix $X$ satisfying the LMI
\begin{align*}
M+\begin{bmatrix}A^*X+XA&XB\\B^*X&0\end{bmatrix}<0.
\end{align*}
\end{lemma}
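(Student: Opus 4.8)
The plan is to prove the two directions separately, disposing of sufficiency by a direct algebraic identity and reserving the real work for necessity, which I would settle by a convex separation argument.

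For sufficiency, suppose a Hermitian $X$ satisfies the LMI. Fix $\omega\in\mathbb{R}$ and $x\in\mathbb{C}^m$ and set $\xi=(j\omega I-A)^{-1}Bx$, so that $A\xi+Bx=j\omega\xi$. The first step is to verify the identity
\[
\begin{bmatrix}\xi\\x\end{bmatrix}^*\begin{bmatrix}A^*X+XA&XB\\B^*X&0\end{bmatrix}\begin{bmatrix}\xi\\x\end{bmatrix}=(A\xi+Bx)^*X\xi+\xi^*X(A\xi+Bx)=0,
\]
where the last equality uses $A\xi+Bx=j\omega\xi$ together with $X=X^*$. Pre- and post-multiplying the LMI by $\begin{bmatrix}\xi^* & x^*\end{bmatrix}$ and its adjoint then annihilates the $X$-term and leaves precisely the desired frequency-domain inequality for every $\omega\in\mathbb{R}$; the case $\omega=\infty$ follows by letting $\omega\to\infty$, equivalently by taking $\xi=0$, which reduces the LMI to its negative definite lower-right block.

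For necessity I would argue by contraposition in the real space $\mathcal{H}$ of Hermitian $(n+m)\times(n+m)$ matrices with inner product $\langle P,N\rangle=\mathrm{tr}(PN)$. Write $\Theta_X=\begin{bmatrix}A^*X+XA&XB\\B^*X&0\end{bmatrix}$ and let $\Gamma=\{\Theta_X:X=X^*\}$, a subspace of $\mathcal{H}$. If no Hermitian $X$ satisfies the LMI, then the affine set $M+\Gamma$ is disjoint from the open cone of negative definite matrices, so the separating hyperplane theorem yields a nonzero $P\in\mathcal{H}$ with $\langle P,N\rangle\le 0$ for all $N\preceq 0$ and $\langle P,L\rangle\ge 0$ for all $L\in M+\Gamma$. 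The first condition forces $P\succeq 0$; boundedness below on the affine set forces $\langle P,\Theta_X\rangle=0$ for all $X$, i.e.\ $P\in\Gamma^\perp$, and hence $\langle P,M\rangle\ge 0$. A short computation identifies $\Gamma^\perp$ as those Hermitian $P=\begin{bmatrix}P_{11}&P_{12}\\P_{12}^*&P_{22}\end{bmatrix}$ obeying the linear constraint $AP_{11}+P_{11}A^*+BP_{12}^*+P_{12}B^*=0$. The crux is then a decomposition lemma: every $P\succeq 0$ lying in $\Gamma^\perp$ belongs to the closed convex cone generated by the rank-one matrices $z(\omega,x)z(\omega,x)^*$, where $z(\omega,x)=\begin{bmatrix}(j\omega I-A)^{-1}Bx\\x\end{bmatrix}$ for $\omega\in\mathbb{R}\cup\{\infty\}$. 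The easy inclusion is immediate, since each $z(\omega,x)z(\omega,x)^*$ is positive semidefinite and lies in $\Gamma^\perp$ by the identity above. Granting the reverse inclusion, we may write $P=\sum_k z(\omega_k,x_k)z(\omega_k,x_k)^*$ with at least one nonzero term, whence the \emph{strict} frequency-domain inequality gives $\langle P,M\rangle=\sum_k z(\omega_k,x_k)^*Mz(\omega_k,x_k)<0$, contradicting $\langle P,M\rangle\ge 0$ and so proving the existence of $X$.

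I expect the decomposition lemma to be the main obstacle: it is the lossless part of the argument and is exactly where the hypothesis that $A$ has no imaginary-axis eigenvalues is used. My plan to establish it is to reduce to the controllable case by a Kalman decomposition, then realize a given $P\in\Gamma^\perp$ with $P\succeq 0$ by a partial-fraction/residue analysis of the resolvent $(j\omega I-A)^{-1}$ along the imaginary axis, using the constraint $AP_{11}+P_{11}A^*+BP_{12}^*+P_{12}B^*=0$ to match the residues; here the point $\omega=\infty$ supplies the limiting terms $\begin{bmatrix}0\\x\end{bmatrix}\begin{bmatrix}0\\x\end{bmatrix}^*$ needed to account for the $P_{22}$ block and to close the cone. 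An alternative, more classical route would bypass separation entirely and construct $X$ from a spectral (Riccati) factorization of the para-Hermitian form $z(j\omega,\cdot)^*Mz(j\omega,\cdot)$, but the separation argument is cleaner and localizes all the difficulty into the single decomposition step.
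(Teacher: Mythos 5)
You should note at the outset that the paper contains no proof of this statement: the KYP lemma is quoted as a known result from \cite{LiuYao2016}, so there is no argument of the paper's to compare against and your proposal must stand on its own. Judged that way, your sufficiency direction is complete and correct (including the $\omega=\infty$ case via the lower-right block). Your necessity direction reproduces the skeleton of the standard separation proof of the strict KYP lemma due to Rantzer (\emph{Systems \& Control Letters}, 1996): the separation setup, the conclusions $P\succeq 0$, $\langle P,\Theta_X\rangle=0$ for all Hermitian $X$, $\langle P,M\rangle\geq 0$, and your identification of $\Gamma^\perp$ with the constraint $AP_{11}+P_{11}A^*+BP_{12}^*+P_{12}B^*=0$ are all correct.

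The genuine gap is that the ``decomposition lemma'' you grant is not a side step: it is the entire mathematical content of the necessity direction (the losslessness of this particular $S$-procedure), and neither your statement of it nor your plan for proving it is adequate. First, the plan: a partial-fraction/residue analysis of the resolvent is not close to a proof, and the known argument is algebraic rather than analytic. One factors
\begin{align*}
P=\begin{bmatrix}F\\G\end{bmatrix}\begin{bmatrix}F\\G\end{bmatrix}^*
\end{align*}
with $\begin{bmatrix}F\\G\end{bmatrix}$ of full column rank, rewrites the constraint as $(AF+BG)F^*+F(AF+BG)^*=0$, deduces $AF+BG=FK$ with $K$ skew-Hermitian (this uses full column rank of $F$; the rank-deficient part of $F$ is precisely what produces the $\omega=\infty$ generators), and unitarily diagonalizes $K=U\,\mathrm{diag}(j\omega_1,\dots,j\omega_r)U^*$, so that the columns of $\begin{bmatrix}FU\\GU\end{bmatrix}$ are exactly vectors $z(\omega_k,x_k)$ and $P$ is a \emph{finite} sum of rank-one generators. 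Second, even granting your lemma as you stated it, the final contradiction does not go through: you defined the lemma with the \emph{closed} convex cone but then used a finite-sum representation. If $P$ is merely a limit of finite sums, the strict inequalities $z^*Mz<0$ only yield $\langle P,M\rangle\leq 0$ in the limit, which is consistent with $\langle P,M\rangle\geq 0$ and produces no contradiction. To repair this you must either prove the finite-sum (lossless) version, or show the cone of finite sums is already closed, e.g.\ by normalizing $x$ and observing that $\{z(\omega,x)z(\omega,x)^*:\omega\in\mathbb{R}\cup\{\infty\},\ \|x\|=1\}$ is a compact set bounded away from $0$ (this is exactly where the absence of imaginary-axis eigenvalues of $A$ enters, making $(j\omega I-A)^{-1}B$ uniformly bounded), whence its conic hull is closed by a Carath\'eodory argument. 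Until the decomposition lemma is actually proved, the necessity half of the KYP lemma remains open in your write-up.
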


\vspace{8pt}

In contrast to the KYP lemma which copes with frequency domain inequalities over the entire frequency, the generalized KYP lemma \cite{IH} has the capability to address the frequency domain inequalities over partial frequency ranges.

Specifically, the generalized KYP lemma builds the equivalence between inequalities on curves in the complex plane and LMIs. Consider the curves characterized by the set
\begin{align*}
\mathrm{\bold{\Lambda}}(\Sigma,\Psi)=\left\{\lambda\in\mathbb{C}\left|\begin{bmatrix}\lambda\\1\end{bmatrix}^*\!\Sigma\begin{bmatrix}\lambda\\1\end{bmatrix}=0,\begin{bmatrix}\lambda\\1\end{bmatrix}^*\!\Psi\begin{bmatrix}\lambda\\1\end{bmatrix}\geq0\right.\right\},
\end{align*}
where $\Sigma,\Psi$ are given $2\times 2$ Hermitian matrices. By appropriately choosing $\Sigma$ and $\Psi$, $\mathrm{\bold{\Lambda}}(\Sigma,\Psi)$ can represent the partial or whole segment(s) of a straight line or a circle in the complex plane. When $\mathrm{\bold{\Lambda}}(\Sigma,\Psi)$ is unbounded, it is extended with $\infty$.

Denote by $\otimes$ the Kronecker product of matrices. A version of the generalized KYP lemma is as follows.

\begin{lemma}[Generalized KYP lemma\!\cite{IH}]
\mbox{Let \!$A\!\in\!\mathbb{C}^{n\times n}$}, $B\!\in\!\mathbb{C}^{n\times m}$, $M\!=\!M^*\!\in\!\mathbb{C}^{(n+m)\times(n+m)}$, and $\mathrm{\bold{\Lambda}}(\Sigma,\Psi)$ be curves in the complex plane. Let $\mathrm{\bold{\Omega}}$ be the set of eigenvalues of $A$ in $\mathrm{\bold{\Lambda}}(\Sigma,\Psi)$. Then the inequality
\begin{align*}
\begin{bmatrix}(\lambda I-A)^{-1}B\\I\end{bmatrix}^*M\begin{bmatrix}(\lambda I-A)^{-1}B\\I\end{bmatrix}<0
\end{align*}
holds for all $\lambda\in \mathrm{\bold{\Lambda}}(\Sigma,\Psi)\backslash \mathrm{\bold{\Omega}}$ if and only if there exist two Hermitian matrices $X$ and $Y$ such that
\begin{align*}
Y>0,\quad \begin{bmatrix}A&B\\I&0\end{bmatrix}^*(\Sigma\otimes X+\Psi\otimes Y)\begin{bmatrix}A&B\\I&0\end{bmatrix}+M<0.
\end{align*}
\end{lemma}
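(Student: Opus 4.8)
The plan is to prove the two implications separately: the ``if'' direction is a direct congruence computation, while the ``only if'' direction rests on a convex-duality argument whose lossless S-procedure step is the genuine obstacle. For sufficiency, suppose Hermitian $X,Y$ with $Y>0$ satisfy the LMI, fix $\lambda\in\mathrm{\bold{\Lambda}}(\Sigma,\Psi)\backslash\mathrm{\bold{\Omega}}$ (so that $\lambda I-A$ is invertible), and set $\eta=(\lambda I-A)^{-1}B$ and $\zeta=\begin{bmatrix}\eta\\I\end{bmatrix}$. The one identity I would record first is $A\eta+B=\lambda\eta$, whence
\begin{align*}
\begin{bmatrix}A&B\\I&0\end{bmatrix}\zeta=\begin{bmatrix}\lambda\eta\\\eta\end{bmatrix}=\begin{bmatrix}\lambda\\1\end{bmatrix}\otimes\eta.
\end{align*}
Congruence of the LMI by $\zeta$ together with the mixed-product rule $(a\otimes\eta)^*(P\otimes X)(a\otimes\eta)=(a^*Pa)\,\eta^*X\eta$ then yields
\begin{align*}
\zeta^*M\zeta+\Bigl(\begin{bmatrix}\lambda\\1\end{bmatrix}^*\!\Sigma\begin{bmatrix}\lambda\\1\end{bmatrix}\Bigr)\eta^*X\eta+\Bigl(\begin{bmatrix}\lambda\\1\end{bmatrix}^*\!\Psi\begin{bmatrix}\lambda\\1\end{bmatrix}\Bigr)\eta^*Y\eta<0.
\end{align*}
On $\mathrm{\bold{\Lambda}}(\Sigma,\Psi)$ the $\Sigma$-contribution vanishes and the $\Psi$-scalar is nonnegative, so with $Y>0$ the $\Psi$-contribution is positive semidefinite; hence $\zeta^*M\zeta<0$, which is exactly the claimed frequency-domain inequality.

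For necessity I would argue by a theorem of alternatives. Abbreviating $\mathcal{L}(X,Y)=\begin{bmatrix}A&B\\I&0\end{bmatrix}^*(\Sigma\otimes X+\Psi\otimes Y)\begin{bmatrix}A&B\\I&0\end{bmatrix}$, the sought LMI is the strict feasibility of ``$\exists\,X=X^*,\,Y=Y^*>0:\ M+\mathcal{L}(X,Y)<0$'', a convex condition. Assuming it fails, I would first relax $Y>0$ to $Y\geq0$ (recovering the strict inequality at the end via a perturbation $Y\mapsto Y+\epsilon I$ and a compactness/continuity argument) and then separate: convex cone separation on the space of Hermitian matrices produces a nonzero $Z\geq0$ that annihilates the range of $\mathcal{L}(\cdot,0)$ in the trace pairing, satisfies $\langle Z,\mathcal{L}(0,Y)\rangle\geq0$ for all $Y\geq0$, and obeys $\mathrm{tr}(MZ)\geq0$. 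Pushing $Z$ through $\hat Z=\begin{bmatrix}A&B\\I&0\end{bmatrix}Z\begin{bmatrix}A&B\\I&0\end{bmatrix}^*$ and reading the constraints against the $2\times2$ block structure of $\hat Z$ recasts them as the ``curve'' conditions $\langle\Sigma,\Theta\rangle=0$ and $\langle\Psi,\Theta\rangle\geq0$ on the induced $2\times2$ Hermitian moment matrix $\Theta$ built from the block traces of $\hat Z$.

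The crux, and the step I expect to be the main obstacle, is to turn this certificate $Z$ into a contradiction with the frequency-domain inequality. This needs the losslessness of the S-procedure on the curve: one must show that $Z$ admits a rank-one decomposition $Z=\sum_k\zeta_k\zeta_k^*$ in which every $\zeta_k=\begin{bmatrix}(\lambda_kI-A)^{-1}Bv_k\\v_k\end{bmatrix}$ is generated by a genuine point $\lambda_k\in\mathrm{\bold{\Lambda}}(\Sigma,\Psi)$ and some vector $v_k$. Granting this, the assumed inequality gives $0\leq\mathrm{tr}(MZ)=\sum_k v_k^*\zeta_k^*M\zeta_k v_k<0$, a contradiction that forces feasibility. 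Establishing the decomposition is precisely the technical heart of Iwasaki--Hara: it hinges on the geometry of $\mathrm{\bold{\Lambda}}(\Sigma,\Psi)$ as the zero set of one Hermitian form sign-constrained by another, and on a rank/dimension count guaranteeing that the generating frequencies land on the curve rather than off it. As a fallback I would keep the option of M\"obius-transforming $\mathrm{\bold{\Lambda}}(\Sigma,\Psi)$ onto the imaginary axis and invoking the classical KYP lemma stated above, at the cost of extra bookkeeping for the excluded eigenvalues $\mathrm{\bold{\Omega}}$ and for the point at infinity on unbounded curves.
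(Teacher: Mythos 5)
The paper offers no proof of this lemma to compare against: it is quoted as a known result from Iwasaki and Hara \cite{IH}, so your attempt must stand on its own. Its first half does. The sufficiency computation --- fixing $\lambda\in\mathbf{\Lambda}(\Sigma,\Psi)\backslash\mathbf{\Omega}$, using $\begin{bmatrix}A&B\\I&0\end{bmatrix}\zeta=\begin{bmatrix}\lambda\\1\end{bmatrix}\otimes\eta$ and the mixed-product rule so that the $\Sigma$-term vanishes on the curve while the $\Psi$-term is nonnegative --- is exactly the standard argument and is correct. One small omission: for unbounded curves the statement includes $\lambda=\infty$, where your congruence degenerates; that case needs the separate observation that unboundedness of the curve forces $\Sigma_{11}=0$ and $\Psi_{11}\geq 0$, whereupon the lower-right $m\times m$ block of the LMI gives $M_{22}+B'\Psi_{11}YB<0$ and hence $M_{22}<0$, which is the frequency-domain inequality at infinity.

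The necessity half, however, is a program rather than a proof, and you say so yourself: everything is conditioned on ``granting'' that the dual certificate $Z$ produced by separation admits a rank-one decomposition $Z=\sum_k\zeta_k\zeta_k^*$ whose generators come from genuine points $\lambda_k\in\mathbf{\Lambda}(\Sigma,\Psi)$. That losslessness statement is not a technical afterthought to be cited in passing --- it is the entire mathematical content of the generalized KYP lemma, and in \cite{IH} it occupies the bulk of the paper (a dedicated rank-one decomposition result for positive semidefinite matrices whose induced $2\times 2$ moment matrix satisfies $\langle\Sigma,\Theta\rangle=0$ and $\langle\Psi,\Theta\rangle\geq 0$). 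Deferring it leaves the hard direction unproven. Your fallback does not rescue this: a M\"obius transformation can carry a full line or circle onto the imaginary axis (this is how discrete-time KYP follows from continuous-time KYP), but whenever the $\Psi$-constraint actively cuts the curve --- e.g.\ the nonnegative-frequency ray $\{j\omega:\omega\geq 0\}$ used in the paper's state-space condition for half-cramped systems, or a bounded frequency segment --- the set $\mathbf{\Lambda}(\Sigma,\Psi)\cup\{\infty\}$ is homeomorphic to a closed arc, not to a circle, so no M\"obius map takes it onto the whole extended imaginary axis. The classical KYP lemma requires the inequality on the entire axis and has no mechanism for such a restriction; that is precisely why the $\Psi\otimes Y$ term and the constraint $Y>0$ exist. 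So: sufficiency is fine, but necessity has a genuine gap at its decisive step, and the proposed fallback fails in exactly the cases the lemma is designed for.
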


\vspace{8pt}

By choosing $\Sigma$ and $\Psi$ appropriately, one can use $\mathrm{\bold{\Lambda}}(\Sigma,\Psi)$ to define a variety of frequency ranges. For instance, when $\Sigma=\begin{bmatrix}0&1\\1&0\end{bmatrix},\Psi=0$, $\mathrm{\bold{\Lambda}}(\Sigma,\Psi)$ is simply the imaginary axis, and the generalized KYP lemma reduces to the classical KYP lemma.

\subsection{Sectored real lemma}
The following theorem gives a state space characterization for frequency-wise cramped systems satisfying $\Phi_\infty(G)<\alpha$, where $\alpha\in(0,\frac{\pi}{2}]$.
\begin{theorem}[Sectored real lemma]
Let $G\!\in\!\mathcal{RH}_\infty^{m\times m}$ with a minimal realization $\left[\begin{array}{c|c}A & B \\ \hline C & D\end{array} \right]$ and $\alpha\in(0,\frac{\pi}{2}]$. Then $\Phi_\infty(G)\!<\!\alpha$ if and only if there exists $X>0$ satisfying the LMI
\begin{align}
\begin{bmatrix}
A'X + XA &  XB \!-\! e^{-j(\frac{\pi}{2} - \alpha)}C' \\  B'X \!-\! e^{j(\frac{\pi}{2} - \alpha)}C & - e^{j(\frac{\pi}{2} - \alpha)}D\!-\!e^{-j(\frac{\pi}{2} - \alpha)}D'
\end{bmatrix} \!<\! 0.\label{srllmi}
\end{align}\label{SRL}
\end{theorem}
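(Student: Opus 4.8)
The plan is to recast $\Phi_\infty(G)<\alpha$ as a single frequency-domain inequality of the type handled by the KYP lemma, and then read \eqref{srllmi} off directly. Throughout I write $\theta=\frac{\pi}{2}-\alpha\in[0,\frac{\pi}{2})$, so that $e^{-j(\frac{\pi}{2}-\alpha)}=e^{j(\alpha-\frac{\pi}{2})}$. First I would interpret the phase bound geometrically. Since $G(-j\omega)=\overline{G(j\omega)}$, the quantity $\Phi_\infty(G)$ equals the supremum of $|\angle x^{*}G(j\omega)x|$, so $\Phi_\infty(G)<\alpha$ holds exactly when $W(G(j\omega))$ lies in the open sector $\{z:|\angle z|<\alpha\}$ for every $\omega$. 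Because $\alpha\le\frac{\pi}{2}$, this sector is convex and equals the intersection of the two half-planes $\{z:\mathrm{Re}(e^{j\theta}z)>0\}$ and $\{z:\mathrm{Re}(e^{-j\theta}z)>0\}$, whose boundaries are the rays at angles $\pm\alpha$. Evaluating each half-plane condition on the numerical range, $W(G(j\omega))$ lies in the first iff the Hermitian matrix $M_1(\omega):=e^{j\theta}G(j\omega)+e^{-j\theta}G(j\omega)^{*}$ is positive definite, and in the second iff $M_2(\omega):=e^{-j\theta}G(j\omega)+e^{j\theta}G(j\omega)^{*}>0$. Hence $\Phi_\infty(G)<\alpha$ is equivalent to $M_1(\omega)>0$ and $M_2(\omega)>0$ for all $\omega$.

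The decisive step is to collapse this pair of conditions into one. Using $G(-j\omega)=\overline{G(j\omega)}$ again, a short computation shows $M_2(\omega)=\overline{M_1(-\omega)}$; since a Hermitian matrix and its entrywise conjugate share the same real spectrum, $M_2(\omega)>0$ iff $M_1(-\omega)>0$. Therefore requiring $M_1$ and $M_2$ to be positive over $\omega\ge0$ is the same as requiring $M_1(\omega)>0$ for all $\omega\in\mathbb{R}$. Adjoining the point at infinity, where $M_1(\infty)=e^{j\theta}D+e^{-j\theta}D'$, I conclude that $\Phi_\infty(G)<\alpha$ is equivalent to $M_1(\omega)>0$ for all $\omega\in\mathbb{R}\cup\{\infty\}$.

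Now I would invoke the KYP lemma. With $R=(j\omega I-A)^{-1}B$ and $G(j\omega)=CR+D$, set
\[
M=-\begin{bmatrix}0 & e^{-j\theta}C'\\ e^{j\theta}C & e^{j\theta}D+e^{-j\theta}D'\end{bmatrix}.
\]
A direct expansion gives $\begin{bmatrix}R\\I\end{bmatrix}^{*}M\begin{bmatrix}R\\I\end{bmatrix}=-M_1(\omega)$, so the frequency inequality $\begin{bmatrix}R\\I\end{bmatrix}^{*}M\begin{bmatrix}R\\I\end{bmatrix}<0$ is precisely $M_1(\omega)>0$. As $G\in\mathcal{RH}_\infty$ has a minimal realization, $A$ is Hurwitz and in particular has no imaginary-axis eigenvalues, so the KYP lemma converts this frequency condition into the existence of a Hermitian $X$ satisfying $M+\begin{bmatrix}A'X+XA & XB\\ B'X & 0\end{bmatrix}<0$, which is exactly \eqref{srllmi}; the reverse implication follows by reading the same chain of equivalences backwards. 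Finally, $X>0$ comes for free: the $(1,1)$ block of \eqref{srllmi} yields $A'X+XA<0$, and with $A$ Hurwitz the Lyapunov inequality forces $X>0$.

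The main obstacle --- and the reason a single LMI can encode a two-sided sector --- is the identity $M_2(\omega)=\overline{M_1(-\omega)}$ of the second paragraph; this is also precisely where the hypothesis $\alpha\le\frac{\pi}{2}$ is indispensable, since only then is the sector convex and representable as an intersection of two half-planes (for $\alpha>\frac{\pi}{2}$ the sector is non-convex and cannot be captured by any single Hermitian positivity condition). A secondary, more technical point is the matching of strict inequalities: $\Phi_\infty(G)<\alpha$ is a strict supremum, so I would use the compactness of $\mathbb{R}\cup\{\infty\}$ together with the continuity of $G(j\omega)$ and its limit $D$ at infinity to pass between the pointwise strict frequency inequality and a uniform sector bound, and to justify including $\omega=\infty$ via well-posedness. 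If the statement is intended with a real symmetric $X$, one further argument is needed to realify the a priori complex Hermitian $X$ produced by the KYP lemma.
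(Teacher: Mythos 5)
Your proof is correct and follows essentially the same route as the paper's: reduce $\Phi_\infty(G)<\alpha$ to strong positive realness of the rotated system $e^{j(\frac{\pi}{2}-\alpha)}G$, rewrite that as a KYP-type frequency-domain inequality with exactly the same matrix $M$, invoke the KYP lemma, and obtain $X>0$ from the $(1,1)$ block together with Hurwitzness of $A$. The only difference is presentational: you justify the rotation step explicitly via the two-half-plane intersection and the conjugate-symmetry collapse $M_2(\omega)=\overline{M_1(-\omega)}$ (and you flag the strictness/compactness subtlety), whereas the paper asserts that equivalence in a single line.
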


\begin{proof}
Note that $\Phi_\infty(G)<\alpha$, $\alpha\in(0,\frac{\pi}{2}]$ is equivalent to requiring $e^{j(\frac{\pi}{2}-\alpha)}G$ to be strongly positive real, i.e.,
\begin{align}
e^{j(\frac{\pi}{2}-\alpha)}G(j\omega)+e^{-j(\frac{\pi}{2}-\alpha)}G^*(j\omega)>0 \label{srli}
\end{align}
for all $\omega\in\mathbb{R}\cup\{\infty\}$. The inequality (\ref{srli}) can be rewritten as
\begin{align*}
\begin{bmatrix}(j\omega I\!-\!A)^{-1}B\\I\end{bmatrix}^*\!M\! \begin{bmatrix}(j\omega I\!-\!A)^{-1}B\\I\end{bmatrix}\!<\!0,
\end{align*}
where $M=\begin{bmatrix}0&-e^{-j(\frac{\pi}{2}-\alpha)}C'\\-e^{j(\frac{\pi}{2}-\alpha)}C&-e^{j(\frac{\pi}{2} - \alpha)}D-e^{-j(\frac{\pi}{2} - \alpha)}D'\end{bmatrix}$. Then, it follows from KYP lemma that $\Phi_\infty(G)<\alpha$ if and only if the LMI (\ref{srllmi}) has a Hermitian solution $X$. Finally, the positive definiteness of $X$ follows from the stability of $A$ and $A'X+XA<0$.
\end{proof}

When $\alpha=\frac{\pi}{2}$, the above sectored real lemma reduces to the strongly positive real lemma \cite{SKS1994}.

The case when $\alpha\!\in\!(\frac{\pi}{2},\pi]$ appears much more complicated. Nevertheless, for half-cramped systems, we are able to derive an LMI condition by employing the generalized KYP lemma.

In dealing with half-cramped real systems, one only needs to concern the frequency domain characterization for positive frequency, i.e., $\{j\omega|\omega\in[0,\infty]\}$. This frequency range can be captured by $\mathrm{\bold{\Lambda}}(\Sigma,\Psi)$ with $\Sigma=\begin{bmatrix}0&1\\1&0\end{bmatrix},\Psi=\begin{bmatrix}0&j\\-j&0\end{bmatrix}$.

Now we present the state-space condition for half-cramped systems.

\begin{theorem}
Let $G\in\mathcal{RH}_\infty^{m\times m}$ with a minimal realization $\left[\begin{array}{c|c}A & B \\ \hline C & D\end{array} \right]$ and $\alpha\!\in\!(\frac{\pi}{2},\pi]$. Then $G$ is half-cramped and $\Phi_\infty(G)\!<\!\alpha$ if and only if there exist Hermitian matrices $X$ and $Y$ satisfying either
\begin{align}
Y>0, \begin{bmatrix}A&B\\I&0\end{bmatrix}'\!\begin{bmatrix}0&X+jY\\X-jY&0\end{bmatrix}\!\begin{bmatrix}A&B\\I&0\end{bmatrix}+M<0,\label{LMI1}
\end{align}
where $M\!=\!\begin{bmatrix}0&-e^{-j(\alpha-\frac{\pi}{2})}C'\\-e^{j(\alpha-\frac{\pi}{2})}C&-e^{j(\alpha-\frac{\pi}{2})}D-e^{-j(\alpha-\frac{\pi}{2})}D'\end{bmatrix}$,
or
\begin{align*}
Y>0, \begin{bmatrix}A&B\\I&0\end{bmatrix}'\!\begin{bmatrix}0&X+jY\\X-jY&0\end{bmatrix}\!\begin{bmatrix}A&B\\I&0\end{bmatrix}+N<0,
\end{align*}
where $N\!=\!\begin{bmatrix}0&-e^{j(\alpha-\frac{\pi}{2})}C'\\-e^{-j(\alpha-\frac{\pi}{2})}C&-e^{-j(\alpha-\frac{\pi}{2})}D-e^{j(\alpha-\frac{\pi}{2})}D'\end{bmatrix}$.
\end{theorem}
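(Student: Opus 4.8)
The plan is to mirror the proof of the sectored real lemma (Theorem \ref{SRL}), replacing the KYP lemma over the whole imaginary axis by the generalized KYP lemma over the positive imaginary axis $\{j\omega:\omega\in[0,\infty]\}$, which is exactly the curve cut out by $\Sigma=\begin{bmatrix}0&1\\1&0\end{bmatrix}$ and $\Psi=\begin{bmatrix}0&j\\-j&0\end{bmatrix}$. Since $G$ is real, $G(j\omega)$ is conjugate symmetric, so the negative-frequency phases are the negatives of the positive-frequency ones; hence both half-crampedness and the bound $\Phi_\infty(G)<\alpha$ are decided by the positive-frequency numerical range alone. The first step is to unpack the frequency-domain inequality hidden in the LMI: a direct computation (as in Theorem \ref{SRL}) shows that $\begin{bmatrix}(j\omega I-A)^{-1}B\\I\end{bmatrix}^*M\begin{bmatrix}(j\omega I-A)^{-1}B\\I\end{bmatrix}<0$ is equivalent to $e^{j(\alpha-\frac{\pi}{2})}G(j\omega)+e^{-j(\alpha-\frac{\pi}{2})}G^*(j\omega)>0$, i.e. to all phases of $G(j\omega)$ lying in the open sector $(-\alpha,\pi-\alpha)$ for $\omega\geq 0$; replacing $M$ by $N$ conjugates the rotation and yields the sector $(\alpha-\pi,\alpha)$.

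For the \emph{if} direction I would start from Hermitian $X,Y$ with $Y>0$ solving (\ref{LMI1}). The middle factor $\begin{bmatrix}0&X+jY\\X-jY&0\end{bmatrix}$ is precisely $\Sigma\otimes X+\Psi\otimes Y$, so the generalized KYP lemma turns (\ref{LMI1}) into the frequency inequality above holding for every $\omega\in[0,\infty]$ (there are no exceptional frequencies to delete, since $A$ is stable and thus has no imaginary eigenvalues). Reading off phases places the positive-frequency numerical range inside the open half plane $\{z:\arg z\in(-\alpha,\pi-\alpha)\}$, which misses the negative real axis, so $G$ is half-cramped; the phase symmetry then upgrades this to $\Phi_\infty(G)<\alpha$. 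The $N$-LMI is handled identically through the sector $(\alpha-\pi,\alpha)$, so either alternative suffices.

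For the \emph{only if} direction, suppose $G$ is half-cramped with $\Phi_\infty(G)<\alpha$. Half-crampedness confines the positive-frequency phases to a closed interval $[\underline{\psi},\overline{\psi}]$ of aperture strictly below $\pi$, while $\Phi_\infty(G)<\alpha$ forces $\overline{\psi}<\alpha$ and $\underline{\psi}>-\alpha$. The key claim I would then establish is that this interval fits inside at least one of the two length-$\pi$ sectors, i.e. that $\overline{\psi}<\pi-\alpha$ or $\underline{\psi}>\alpha-\pi$, equivalently that the arc can be rotated into $(-\frac{\pi}{2},\frac{\pi}{2})$ by the specific angle $+(\alpha-\frac{\pi}{2})$ or $-(\alpha-\frac{\pi}{2})$. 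Once the correct orientation is fixed, I rewrite the corresponding rotated strong-positive-real inequality in the $M$- or $N$-form and invoke the generalized KYP lemma to extract Hermitian $X,Y$ with $Y>0$; the positive-definiteness loose ends are tied off exactly as in Theorem \ref{SRL}, using stability of $A$ together with the $(1,1)$-block inequality.

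I expect the whole difficulty to concentrate in that single geometric claim of the necessity direction — that a half-cramped arc lying in $(-\alpha,\alpha)$ with aperture below $\pi$ can always be swung into $(-\frac{\pi}{2},\frac{\pi}{2})$ by exactly $\pm(\alpha-\frac{\pi}{2})$. This is the step that forces, and explains, the disjunctive ``either/or'' form of the statement, and it is where the interplay between the aperture bound, the sector's location, and the admissible rotation angle must be pinned down with care; the conjugate symmetry of $G(0)$ and $G(\infty)$, whose numerical ranges are symmetric about the real axis, is the natural extra structure I would try to exploit to anchor the arc. By contrast, the generalized KYP bookkeeping — the treatment of the point at infinity, the Hermiticity and sign of $X$, and the mechanical expansion of the quadratic form — is routine.
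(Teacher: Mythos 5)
Your reduction machinery coincides with the paper's: the same $\Sigma=\left[\begin{smallmatrix}0&1\\1&0\end{smallmatrix}\right]$, $\Psi=\left[\begin{smallmatrix}0&j\\-j&0\end{smallmatrix}\right]$, the same matrices $M$ and $N$, the same translation of each LMI into a rotated strong-positive-real inequality over $\omega\in[0,\infty]$ via the generalized KYP lemma, and your sufficiency argument (including $\mathbf{\Omega}=\emptyset$ because $A$ is Hurwitz) is sound, modulo routine uniformity points: strict pointwise positivity on the compact set $[0,\infty]$ plus continuity gives a uniform lower bound, which is what yields containment in an \emph{open} half plane and the \emph{strict} bound $\Phi_\infty(G)<\alpha$. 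The gap is exactly where you predicted, but it is worse than you think: the geometric claim you defer --- that half-crampedness together with $\Phi_\infty(G)<\alpha$ forces the positive-frequency phases into $(-\alpha,\pi-\alpha)$ or into $(\alpha-\pi,\alpha)$ --- is not merely the hard step, it is false. For the record, the paper does not prove it either: its proof opens with ``By definition, we know $G$ is half-cramped and $\Phi_\infty(G)<\alpha$ \ldots if and only if'' one of the two rotated inequalities holds, i.e.\ it asserts your unproven claim as a definition.

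The conjugate symmetry you hoped to exploit is precisely what kills the claim. Since $G$ is real, the phases at $\omega=0$ and $\omega=\infty$ come in $\pm$ pairs, so the positive-frequency phase interval $[\underline\psi,\overline\psi]$ is anchored symmetrically about $0$; the rotation by $+(\alpha-\frac{\pi}{2})$ requires $\overline\psi<\pi-\alpha$, the rotation by $-(\alpha-\frac{\pi}{2})$ requires $\underline\psi>-(\pi-\alpha)$, and both fail simultaneously as soon as the phase response swings past $\pm(\pi-\alpha)$ on \emph{both} sides --- which is perfectly compatible with $\overline\psi-\underline\psi<\pi$ and $\max\{\overline\psi,-\underline\psi\}<\alpha$ when $\alpha>\frac{\pi}{2}$. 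Concretely, take $\alpha=\frac{3\pi}{4}$ (so $\pi-\alpha=45^\circ$) and the SISO lead--lag system $G(s)=\frac{(s+1)(s+2000)}{(s+20)(s+100)}$: its phase over $\omega\in[0,\infty]$ ranges over roughly $[-62^\circ,62^\circ]$ (about $+62^\circ$ near $\omega=5$, about $-62^\circ$ near $\omega=450$) with modulus bounded away from zero, so $G$ is half-cramped (spread $\approx 124^\circ<180^\circ$, never meeting the negative real axis) and $\Phi_\infty(G)\approx 62^\circ<135^\circ=\alpha$; yet the $M$-inequality fails wherever the phase exceeds $45^\circ$ and the $N$-inequality fails wherever it drops below $-45^\circ$, so by the (correct) KYP equivalence neither LMI in the theorem, in particular (\ref{LMI1}), is feasible. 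Hence the ``only if'' direction of the statement is false under the paper's own definitions of half-cramped and $\Phi_\infty$, and no argument can close your gap: what the two LMIs actually characterize is the strictly stronger property that the positive-frequency phases lie in one of the two length-$\pi$ sectors $(-\alpha,\pi-\alpha)$ or $(\alpha-\pi,\alpha)$, and the theorem (equivalently, the paper's ``by definition'' step) would need to be restated with that as the frequency-domain hypothesis.
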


\vspace{8pt}

\begin{proof}
By definition, we know $G$ is half-cramped and $\Phi_\infty(G)\!<\!\alpha$ for $\alpha\in(\frac{\pi}{2},\pi]$ if and only if either
\begin{align}
e^{j(\alpha-\frac{\pi}{2})}G(j\omega)+e^{-j(\alpha-\frac{\pi}{2})}G^*(j\omega)>0 \label{halfcrampedfdi}
\end{align}
or
\begin{align*}
e^{-j(\alpha-\frac{\pi}{2})}G(j\omega)+e^{j(\alpha-\frac{\pi}{2})}G^*(j\omega)>0
\end{align*}
holds for all $\omega\in[0,\infty]$. For brevity, we consider the case when (\ref{halfcrampedfdi}) holds for all $\omega\in[0,\infty]$. The other case can be shown similarly. The inequality (\ref{halfcrampedfdi}) can be rewritten into
\begin{align*}
\begin{bmatrix}(j\omega I-A)^{-1}B\\I\end{bmatrix}^*M\begin{bmatrix}(j\omega I-A)^{-1}B\\I\end{bmatrix}<0,\;\omega\in[0,\infty].
\end{align*}
Then, applying the generalized KYP lemma with $\Sigma\!=\!\begin{bmatrix}0&1\\1&0\end{bmatrix}$ and $\Psi=\begin{bmatrix}0&j\\-j&0\end{bmatrix}$ yields that the above frequency domain inequalities hold if and only if there exist Hermitian matrices $X$ and $Y$ satisfying LMIs (\ref{LMI1}). This completes the proof.
\end{proof}

\section{Conclusion} \label{ending}
In this paper, we define the phase responses of frequency-wise cramped MIMO LTI systems. The combined magnitude and phase plots constitute a complete MIMO Bode plot. We obtain a small phase theorem for closed-loop stability, a counterpart of the well-known small gain theorem. We also derive a sectored real lemma for phase-bounded systems, a counterpart of the bounded real lemma.

This paper focuses on the analysis of MIMO systems. We are currently working on the synthesis part, aiming at solving an $\mathcal{H}_\infty$-phase optimal control problem, a counterpart of the classical $\mathcal{H}_\infty$-norm optimal control problem. How to apply and extend phase analysis to large scale dynamical networks is also an interesting future work.

\end{document}